\newtheorem{thm}{Theorem}
\newtheorem{prop}[thm]{Proposition}
\newcommand\RR{\mathbb R}
\newcommand\PP{\mathbb P}
\newcommand\Pc{\mathcal P}
\newcommand\Ac{\mathcal A}
\def\Lc{{\mathcal{L}}}
\def\Nc{{\mathcal{N}}}
\def\tg{{\mathbf{t}}}
\newcommand{\com}[1]{{\color{black} #1}}
\newcommand{\lb}[1]{{\color{black} #1}}
\begin{document}

\begin{frontmatter}

\title{Extraction of cylinders and cones from minimal point sets}

\author[1]{Laurent Bus\'e}
\author[1,2]{Andr\'e Galligo}
\author[1,3]{Jiajun Zhang}
\address[1]{INRIA Sophia Antipolis - M\'editerann\'ee}
\address[2]{Laboratoire J.-A. Dieudonn\'e, Universit\'e de Nice Sophia Antipolis}
\address[3]{\'Ecole Polytechnique de l'Universit\'e de Nice Sophia Antipolis}
\begin{abstract}
We propose new algebraic methods for extracting cylinders and cones from minimal point sets, including oriented points. More precisely, we are interested in computing efficiently cylinders through a set of three points, one of them being oriented, or through a set of five simple points. We are also interested in computing efficiently cones through a set of two oriented points, through a set of four points, one of them being oriented,  or through a set of six points. For these different interpolation problems, we give optimal bounds on the number of solutions. Moreover, we describe algebraic methods targeted \com{to solve these problems efficiently}.
\end{abstract}

\begin{keyword}
Mixed set of 3D points, cylinders, cones, interpolation.
\end{keyword}	

\end{frontmatter}

\section{Introduction}

 \com{Extracting} geometric primitives from 3D point clouds is an important problem in reverse engineering. These 3D point clouds are typically obtained \com{from} accurate 3D scanners and there exist several methods for \com{extracting} 3D geometric primitives \cite{PA}. An important category among these methods \com{is based} on the RANSAC approach \cite{Fischler:1981:RSC:358669.358692,schnabel-2007-efficient,PA}. For such methods, the primitives are \com{extracted directly}  from the input point cloud. The basic idea is to extract a particular elementary type of shape, such as planes, spheres, cylinders, cones or tori, from the smallest possible set of points and then to judge if this extracted primitive is relevant to the full point cloud. Therefore, for this category of methods it is very important to compute a particular type of shape through the smallest possible number of points, including normals if available. If \com{extracting} planes and spheres is easy, the cases of cylinders and cones are more involved. In this paper we provide new methods for extracting these geometric primitives from the smallest possible number of points, counting multiplicities of oriented points (i.e.~points given with their normal vector). \com{These methods are intended to serve the larger goal of improving speed and numerical accuracy in data extraction from graphical information.}   As far as we know, and surprisingly, the above-mentioned problems have not appeared in the existing literature with the exception of \cite{Devillers02,Lichtblau12}. Instead, the classical approaches to these interpolation problems usually extract, actually we should say estimate, these geometric primitives from an overdetermined number of points, counting multiplicities (e.g.~\cite{Lukacs:1998aa}).

\medskip

An \emph{oriented point} is a couple of a point and a nonzero vector. A surface is said to interpolate an oriented point if the point belongs to the surface and its associated vector is \com{collinear} to the normal of the surface at this point, we do not assume that the orientation of the normal of the point is the same as the orientation of the surface since often in the data sets normals are unoriented. Moreover, it is important to deal with inhomogeneous data, that is to say some points are oriented but not all, in order to take into account the estimated accuracy of oriented point clouds that are \com{generated} by means of normal estimation algorithms. \com{Data} made of points and oriented points will be called a \emph{mixed set of points}.

\medskip 

We emphasize that interpolating at a point imposes a single algebraic condition on a given shape whe\-re\-as interpolating at an oriented 3D point imposes three algebraic conditions. Typically, a 3D plane is uniquely defined either by three distinct points or by one oriented point. A sphere is uniquely defined either by four points or by one oriented point and an additional point. In these two cases, it turns out that there is a unique shape that interpolates a mixed set of points corresponding to the number of parameters of this shape (a plane is determined by three parameters and the sphere is determined by four parameters). In this paper, we will treat interpolation of two other basic shapes, namely cylinders and cones for which the situation is more involved.

\medskip

Our approach is inspired by effective methods in algebraic geometry.
\lb{We consider two families of unknowns. The first one corresponds to the parameters needed to describe all features of the targeted
surface (e.g. the radius and axis of a circular cylinder) and hence its equation. The second family consists of auxiliary
unknowns (e.g. such as a special point on that axis)  which permit us to describe a collection of
geometric constructions.} These constructions are designed to establish a complete link between the input and
the first family of unknowns.
Then, we translate algebraically the collection of con\-str\-aints attached to theses geometric constructions into a system of polynomial equations
that we further analyze and simplify, discarding spurious solutions if \com{necessary}.
Since the input and output are (and should be) real approximate data, we designed efficient algorithms to compute
very accurate real solutions of these systems of equations.
Indeed, in all the considered cases, we were able to express the results as the solutions of (generalized) \com{eigenvalue problems}
together with close formulas. \com{These expressions allow us} to rely on classical matrix \com{computation} software and achieve accuracy and efficiency.
Prototypes of our algorithms \com{are} implemented in the computer algebra system {\tt MAPLE}, and we provide some statistics and
timings (which are quite satisfactory).

\section{Interpolation of cylinders}

A cylinder (more precisely a right circular cylinder) is defined as the set of points in the three-dimensional affine space $\RR^3$ located at a fixed distance (called the \emph{radius} of the cylinder) of a given straight line (called the \emph{axis} of the cylinder). It is hence defined by means of five parameters~: four parameters describe a line in $\RR^3$ and an additional parameter measures the radius.

\medskip

A popular determination of a cylinder is done \com{by interpolating} two points with normals, which imposes six conditions (instead of five). So, a priori no cylinder interpolates \com{this data}; therefore some approximations are necessary.  In this section, we will give new methods to compute  \com{cylinders} using just five independent conditions. There are two possible such minimal configurations, either an oriented point and two other distinct points, or five distinct points.

\subsection{Cylinders through a mixed minimal point set}

\com{We seek} the cylinders that interpolate a given mixed minimal set of points $\Pc$. Since a cylinder is given by 5 parameters, $\Pc$ is assumed to be composed of an oriented point $p_1$ with its normal vector $n_1$ and two other distinct points $p_2, p_3$ in $\RR^3$. 

\medskip

First, by a linear change of coordinates, one can assume that $p_1=(0,0,0)$ and  $n_1=(0,0,1)$ and we set $p_2=(x_2,y_2,z_2)$ and $p_3=(x_3,y_3,z_3)$. Then, the axis of a cylinder interpolating $\Pc$ must be orthogonal to the $z$-axis and must intersect it. It follows that a normal plane $\Pi$ contains the $z$-axis, hence is given by an equation of the form $lx+my=0$ where $t:=(l,m,0)$ is the corresponding direction of the axis. Observe that these directions are in correspondence with a projective line $\PP^1$. For simplicity, we set $\rho:=\sqrt{l^2+m^2}=\|t\|>0$.

Now, we compute the orthogonal projections $q_1$ and $q_2$ of $p_2$ and $p_3$ onto the plane $\Pi$. $\Pi$ contains the point $p_1$ and is generated by the two orthogonal vectors $n_1$ and $v=n_1\wedge t=(-m,l,0)$. The matrix
$$M=
\left(
\begin{array}{ccc}
	\frac{l}{\rho} & \frac{m}{\rho} & 0 \\
	- \frac{m}{\rho} & \frac{l}{\rho} & 0 \\
	0 & 0 & 1
\end{array}
\right)$$
defines the change of coordinates from the current coordinate system $(x,y,z)$ to a new coordinate  system $(x',y',z')$, with the same origin $p_1$, defined by the three vectors $t/\rho,$ $v/\rho,$ $n_1$, where $\Pi$ has equation $x'=0$. It follows that the coordinates $(x_i',y_i',z_i')$ of $p_i$, $i=2,3$, in this new coordinate system are given by
$$(x_i',y_i',z_i')=\left(x_i\frac{l}{\rho}+y_i \frac{m}{\rho}, - x_i\frac{m}{\rho}+ y_i\frac{l}{\rho} , z_i\right).$$
Therefore, the coordinates of the orthogonal projections $q_i$, $i=2,3$ are given by
$$q_i=\left(- x_i\frac{m}{\rho}+ y_i\frac{l}{\rho} , z_i\right) \in \Pi$$
in the basis $v/\rho,n_1$. 

The existence of a cylinder interpolating $\Pc$ is equivalent to the fact that the points $p_1$, $q_2$ and $q_3$ \com{all belong to} a circle whose center $c$ is located on the $z$-axis, say $c=(0,0,r)$. Such a circle has an equation of the form $y'^2+(z'-r)^2=r^2$, or equivalently $y'^2+z'^2-2rz'=0$. Therefore, this cocyclicity condition can be written as
\begin{multline*}
0=\left| 
\begin{array}{cc}
 y_2'^2+z_2^2 & z_2 \\
y_3'^2+z_3^2 & z_3 
\end{array}
\right| \\
=\frac{1}{\rho^2}
\left| 
\begin{array}{cc}
\left(- x_2m+ y_2l\right)^2 + (l^2+m^2)z_2^2 & z_2 \\
\left(- x_3m+ y_3l\right)^2 + (l^2+m^2)z_3^2 & z_3 
\end{array}
\right|.
\end{multline*}
Since $\rho>0$, the expansion of this latter determinant \com{allows us to} rewrite this condition as a
degree 2 homogeneous equation $al^2+blm+cm^2$ where the coefficients $a,b,c$ are given by the following closed formulas
$$
a:=\left|
\begin{array}{cc}
	y_2^2+z_2^2 & z_2 \\
	y_3^2+z_3^2 & z_3
\end{array}\right|, \ \
b:=-2\left|
\begin{array}{cc}
	x_2y_2 & z_2 \\
	x_3y_3 & z_3
\end{array}\right|, $$
$$c:=\left|
\begin{array}{cc}
	x_2^2+z_2^2 & z_2 \\
	x_3^2+z_3^2 & z_3
\end{array}\right|.
$$
   
Unless $a=b=c=0$, this equation has two roots, counting multiplicities, in the field of complex numbers.  If a real solution is found, that is to say the direction of a real cylinder interpolating $\Pc$ (observe that one can impose $\rho=1$ since the condition is homogeneous in $l,m$), then the remaining parameter $r$ is uniquely determined by  
one of the formulas
\begin{align}\label{eq:z0}
	2z_2r&=y_2'^2 + z_2^2=\frac{1}{\rho^2}\left(- x_2m+ y_2l \right)^2+z_2^2, \\
	 2z_3r&=y_3'^2+z_3^2=\frac{1}{\rho^2}\left(- x_3m+ y_3l \right)^2 + z_3^2,	
\end{align}
depending \com{on} whether $z_2\neq 0$ or $z_3\neq 0$. We notice that if $z_2=z_3=0$ then $a=b=c=0$. 

\begin{thm}\label{thm:cyl1N2P} Given a mixed set of points $\Pc$ composed of an oriented point $p_1, n_1$ and two other points $p_2,p_3$ such that 
	\begin{itemize}
		\item[i)] $p_1,p_2,p_3$ are all distinct,
		\item[ii)] $p_1,p_2,p_3$ do not belong to \com{a common plane that} is normal to $n_1$,
		\item[iii)] $p_2$ and $p_3$ are not symmetric with respect to the line through $p_1$ and generated by $n_1$, 
	\end{itemize} 
then there exist at most 2 real cylinders interpolating $\Pc$. Otherwise, \com{there exists} a cylinder (possibly "flat", i.e.~with infinite radius) interpolating $\Pc$ in any direction \com{that} is normal to $n_1$.
\end{thm}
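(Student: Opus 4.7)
The plan is to reduce the statement to a purely algebraic claim about the coefficients $a,b,c$ computed just above. The preceding derivation already establishes that a direction $[l:m] \in \PP^1$ yields a real cylinder interpolating $\Pc$ if and only if $al^2 + blm + cm^2 = 0$, the radius then being uniquely recovered from~(\ref{eq:z0}). Hence, if $(a,b,c) \neq (0,0,0)$, this nonzero degree-two homogeneous polynomial in $(l,m)$ has at most two projective roots over $\CC$ and so at most two real roots, giving at most two real interpolating cylinders. If instead $a=b=c=0$, every direction is a solution and we obtain a cylinder (possibly flat) for each direction normal to $n_1$. The theorem therefore reduces to the equivalence
\[
a=b=c=0 \iff \text{at least one of (i), (ii), (iii) fails.}
\]

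The direction $(\Leftarrow)$ is straightforward by direct substitution into the closed forms of $a,b,c$. If (ii) fails, then $z_2=z_3=0$ kills the second column of each defining determinant. If (iii) fails, the substitution $x_3=-x_2$, $y_3=-y_2$, $z_3=z_2$ makes the two rows of each determinant equal. If (i) fails through $p_1=p_2$ (or $p_1=p_3$) the corresponding row vanishes, and if $p_2=p_3$ the two rows coincide.

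For the harder direction $(\Rightarrow)$, I split on the vanishing of $z_2, z_3$. The case $z_2=z_3=0$ means (ii) fails, so we are done. If exactly one of $z_2, z_3$ vanishes, say $z_2=0$ and $z_3\neq 0$, then $a=0$ and $c=0$ immediately force $y_2=x_2=0$, so $p_1=p_2$ and (i) fails. The substantial case is $z_2z_3\neq 0$: setting $w_j := x_j + \mathbf{i}\, y_j$ and $\lambda := z_2/z_3 \in \RR^*$, a short computation shows that $a-c=0$ together with $b=0$ collapse into the single complex identity $w_2^2 = \lambda\, w_3^2$, while $a+c=0$ becomes $|w_2|^2 = \lambda|w_3|^2 - 2\lambda(\lambda-1)z_3^2$. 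Taking moduli in the first identity yields $|w_2|^2 = |\lambda|\,|w_3|^2$, and comparison with the second forces $\lambda=1$; indeed, the alternative $\lambda<0$ would give $|w_3|^2 = (\lambda-1)z_3^2 < 0$, impossible. Hence $z_2=z_3$ and $w_3=\pm w_2$, so either $p_2=p_3$ (violating (i)) or $p_3$ is the reflection of $p_2$ about the $z$-axis (violating (iii)).

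The main obstacle is the final subcase $z_2z_3\neq 0$. The key device is the complex coordinate $w_j$, which packages the two real relations $a-c=0$ and $b=0$ into the single identity $w_2^2 = \lambda\, w_3^2$, making the comparison with the modulus relation coming from $a+c=0$ essentially a one-line computation; done directly with the real variables $x_j, y_j$, the elimination would be doable but opaque.
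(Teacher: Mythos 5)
Your proof is correct, and it follows the paper's overall reduction exactly: everything hinges on showing that $a=b=c=0$ precisely when one of i), ii), iii) fails, the easy implication being checked by substitution in both versions. Where you genuinely diverge is in the hard implication. The paper first uses the remaining rotational freedom about the $z$-axis to normalize $x_2=0$, which factors $b$ as $2x_3y_3z_2$, and then runs a case analysis on which of the three factors vanishes, resolving each branch by inspecting $a$ and $c$. You instead split on the vanishing of $z_2,z_3$ and, in the main case $z_2z_3\neq 0$, package the two relations $c-a=0$ and $b=0$ into the single complex identity $w_2^2=\lambda w_3^2$ with $w_j=x_j+\mathbf{i}y_j$ and $\lambda=z_2/z_3$; comparing its modulus with the relation coming from $a+c=0$ forces $\lambda=1$ (the branch $\lambda<0$ dying on a positivity contradiction), whence $w_2=\pm w_3$ and one of i), iii) fails. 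Your route avoids the extra normalization, keeps the argument symmetric in $p_2,p_3$, and makes the geometric conclusion ($z_2=z_3$ and reflection about the axis) drop out of a one-line modulus computation; the paper's route is more elementary (no complex numbers) but requires tracking more sub-cases. I verified your identities: $(c-a)/z_3$ and $b$ are indeed the real and imaginary parts of $w_2^2-\lambda w_3^2$ up to constants, and $(a+c)/z_3=0$ gives $|w_2|^2=\lambda|w_3|^2-2\lambda(\lambda-1)z_3^2$ after using $z_2=\lambda z_3$, so the comparison argument is sound.
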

\begin{proof}
Following the above discussion, this theorem will be proved if we show that $a=b=c=0$ if and only if at least one of the three conditions i), ii), iii) holds. It is not hard to check that if one of the three latter conditions holds then $a=b=c=0$. To prove the converse, we observe that by a linear change of coordinate, we can assume that $x_2=0$ in addition to the fact that $p_1=(0,0,0)$ and $n_1=(0,0,1)$. Then, we have $b=2x_3y_3z_2$ and hence three cases to analyze.

If $x_3=0$, then $c=z_2z_3(z_2-z_3)$ so that $c=0$ implies that $z_2=0$, or $z_3=0$ or $z_2=z_3$. If $x_3=z_2=0$, then $a=y_2^2z_3$ so that $a=0$ implies that i) or ii) hold. Similarly, if $x_3=z_3=0$ then $a=y_3^2z_2$ so that $a=0$ implies that i) or ii) hold. Finally, if $x_3=0$ and $z_2=z_3$ then 
$a=z_2(y_2^2-y_3^2)$ so that $a=0$ implies that i) or iii) hold.  

The case corresponding to $y_3=0$ can be treated exactly as the previous case $x_3=0$,  exchanging $x_2$ with $x_3$ and $y_2$ with $y_3$ \com{leaves $b$ unchanged}  and permute $a$ and $c$.

Finally, if $z_2=0$, then $c=0$ (recall $x_2=0$) and $a=y_2^2z_3$. So $a=0$ if either $y_2=0$ or $z_3=0$. But $y_2=0$ means that $p_1=p_2$, i.e. i) holds, and $z_3=0$ means that ii) holds. 	
\end{proof}

When a mixed set of points $\Pc$ satisfies conditions i), ii) and iii) in \com{Theorem \ref{thm:cyl1N2P}}, then $a$, $b$ and $c$ are not all zero and hence \com{whether there are} zero, one or two homogeneous real solutions to the equation $al^2+blm+cm^2=0$ is decided by means of the discriminant  $\Delta:=b^2-4ac$ which depends on the coordinates of $p_2$ and $p_3$. If $\Delta<0$ then there is no real homogeneous solution, if $\Delta=0$ then \com{there exists} a double homogeneous solution and if $\Delta>0$ then there exists two distinct homogeneous solutions. As we have already observed in the proof of Theorem \ref{thm:cyl1N2P}, it is possible to assume, without loss of generality, that $x_2=0$ in addition of $p_1=(0,0,0)$ and $n_1=(1,0,0)$. Then, a straightforward computation shows that 
\begin{multline*}
\Delta=
4 {  z_2} {  z_3}  \left( {{  y_2}}^{2} \left( {{  x_3}}^{2}+{{
  z_3}}^{2} \right) +{{  z_2}}^{2} \left( {{  x_3}}^{2}+{{  y_3}}^
{2} \right)\right. \\ \left. -{  z_2} {  z_3}  \left( {{  x_3}}^{2}+{{  y_2}}^{2}
+{{  y_3}}^{2}+ \left( {  z_2}-{  z_3} \right) ^{2} \right) 
 \right).	
\end{multline*}
\com{From} this equation we see directly that there \com{are} no real cylinders interpolating $\Pc$ if $p_2$ and $p_3$ \com{are not} on the same side of the plane through $p_1$ and normal to $n_1$ (i.e.~$z_2$ and $z_3$ have opposite signs). Another interesting case is to assume that $p_2$ belongs to the plane through $p_1$ and normal to $n_1$ (i.e.~$z_2=0$). Indeed, in this case $\Delta=0$ so there is a unique cylinder (counted with multiplicity two) through $\Pc$ : its direction is given by  $t=(-b,2a,0)=(2x_2y_2z_3,2y_2^2z_3,0)$ and $r$ is still defined by \eqref{eq:z0} ($z_3$ is assumed to be nonzero for otherwise $a=b=c=0$).

Finally, we notice that if $a=0$ then the directions are given by the equation $blm+cm^2=0.$
It follows that these directions are given by $(1,0,0)$ and $(-c,b,0)$. Of course, if in addition $b=0$ then there is a single \com{interpolating} cylinder that appears with multiplicity 2 (the discriminant vanishes). The radius \com{$r_1$ or $r_2$ is} then deduced from \eqref{eq:z0}. Similarly, if $c=0$ then the directions of the interpolating cylinders are given by $(0,1,0)$ and $(-b,a,0)$.

\paragraph{Extraction algorithm} From the above analysis, we deduce the following extraction algorithm \com{for cylinders} from a mixed point set $\Pc$:
\begin{enumerate}
	\item Perform a linear change of coordinates so that the oriented point is at the origin and its normal vector is $(0,0,1)$. If $z_2z_3<0$ then there is no solution.
	\item Compute the quantities $a$, $b$ and $c$. If $a=b=c=0$ then stop because there are infinitely many interpolating cylinders. 
	\item If $a=0$ or $c=0$ then compute the interpolating cylinders as explained above and stop.
	\item Otherwise, solve the equation $al^2+blm+cm^2=0$ and keep only those roots that are real numbers (up to numerical precision). For each such root, compute the corresponding radius by means of \eqref{eq:z0} and return the interpolating cylinder. 
\end{enumerate} 

This algorithm has been implemented with the {\tt Maple} software and all timings are measured on a Mac laptop equipped with Intel Core i7 CPU @ 2.8GHz, 16 GB memory. We observed that \com{computing} the cylinders through a random set of points takes \com{on average} 3.5ms \com{(including all the steps \lb{in the above algorithm})} and is almost constant, \com{i.e.~independent} of the point set. 
The proportion of the number of cylinders found through random point sets is given in Table \ref{tab:cylinders} and an illustrative picture is given in Figure \ref{fig:cyl1N2P}.

\begin{figure}[ht!]
	\centering
	\includegraphics[width=60mm]{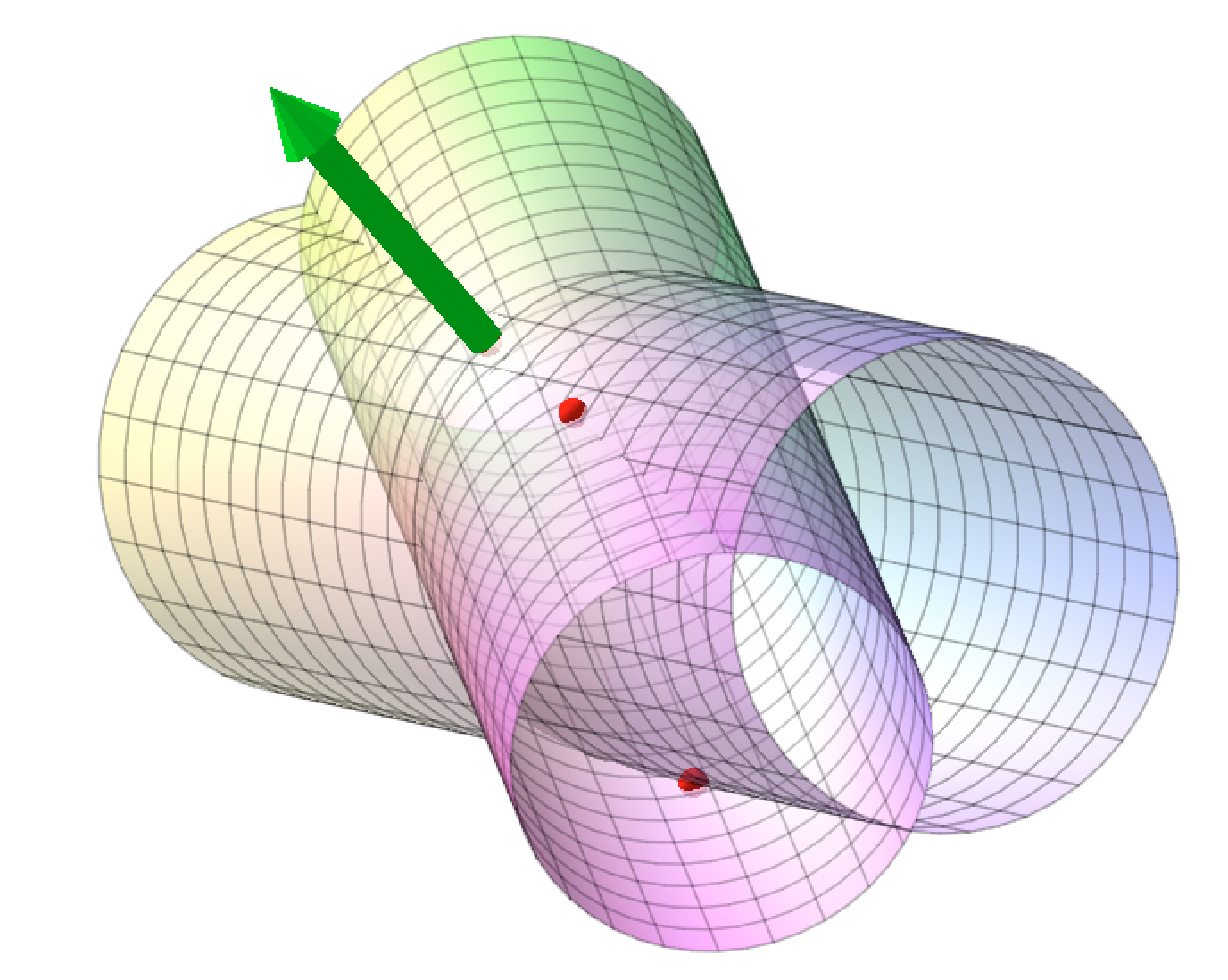}		
\caption{A general mixed set of points with the two interpolated cylinders.}\label{fig:cyl1N2P}	
\end{figure}

\begin{table}[ht!]
\begin{center}
\begin{tabular}{c|c|c|}
Number of cylinders	 & 0 & 2  \\ 
	 \hline
Proportion (\%)	 &  46.87  & 53.13 \\
\hline	
\end{tabular}
\caption{Proportion of the number of cylinders found through a thousand random point sets.}\label{tab:cylinders}
\end{center}
\end{table}

\subsection{Cylinders through five points}

The problem of extracting a cylinder passing through five points has already been treated in the literature, see \cite{Devillers02,BKM05,Lichtblau12} \lb{(see also \cite{Chaperon:2001:ECF:647260.718511,Beder:2006:DSC:2094437.2094451})}. These works solve the problem by relying on a polynomial system solver. In this section, we briefly review the model already described in \cite{Devillers02} and then we propose an improvement to gain efficiently. Our key ingredient removes some spurious solutions by means of additional algebraic manipulations. 

\subsubsection{Geometric analysis}

\lb{Given} a set $\Pc$ of five points $p_1, p_2, p_3, p_4, p_5$ \lb{we seek cylinders} through $\Pc$. First of all, \lb{by a linear} change of coordinates, we assume without loss of generality that the coordinate system $(x,y,z)$ is such that
\begin{equation}\label{eq:hyp-p1p2p3p4p5}
p_1=(0,0,0), \ p_2=(x_2,0,0), \ p_3=(x_3,y_3,0),	
\end{equation} 
$$p_4=(x_4,y_4,z_4), \ p_5=(x_5,y_5,z_5). $$

Let us pick a nonzero vector $t=(l,m,n)$ (which will represent the direction of the axis) and denote by $\Pi$ the plane through the origin which is orthogonal to $t$. Let us also denote by $q_i$ the orthogonal projection of $p_i$ onto $\Pi$, for all $i=1,\ldots,5$. Then, $p_1,\ldots,p_5$ belongs to a cylinder of direction $t$ if and only if the coplanar points $q_1,\ldots,q_5$ are cocyclic. In order to make explicit \lb{this cocyclic condition}, we consider a new system of coordinates $(x',y,',z')$ whose third axis is parallel to $t$. More precisely, we choose the coordinates system $(x',y',z')$ which is obtained from the coordinate system $(x,y,z)$ by the change of coordinates defined by the orthogonal matrix
$$M=\left( \begin{array}{ccc}
           \frac{\sqrt{m^{2}+n^{2}}}{\|\tg\|} & -\frac{lm}{\|\tg\|\sqrt{m^{2}+n^{2}}} & -\frac{ln}{\|\tg\|\sqrt{m^{2}+n^{2}}} \\
           0 & \frac{n}{\sqrt{m^{2}+n^{2}}} & -\frac{m}{\sqrt{m^{2}+n^{2}}} \\
           \frac{l}{\|\tg\|} & \frac{m}{\|\tg\|} & \frac{n}{\|\tg\|} \\
           \end{array} \right).$$
Therefore, the coordinates of $q_i$ in the system $(x',y',z')$ are given by
\begin{multline*}
\left( x_{i} \frac{\sqrt{m^{2}+n^{2}}}{\|\tg\|}  - \frac{lm}{\|\tg\|\sqrt{m^{2}+n^{2}}} y_i -\frac{ln}{\|\tg\|\sqrt{m^{2}+n^{2}}} z_i , \right. \\
 \left.        \frac{n}{\sqrt{m^{2}+n^{2}}} y_i  -\frac{m}{\sqrt{m^{2}+n^{2}}} z_i , 0 \right)=:(x_i',y_i',0).
\end{multline*}
\com{Now} the points $q_1,q_2,q_3$ and $q_4$ are cocyclic (or aligned, which corresponds to a circle of infinite radius) providing 
$$\left|
           \begin{array}{cccc}
           1 & 1 & 1 & 1 \\
           x'_1 & x'_2 & x'_3 & x'_4 \\
           y'_1 & y'_2 & y'_3 & y'_4 \\
           x_1'^2+y_1'^2 & x_2'^2+y_2'^2 & x_3'^2+y_3'^3 & x_4'^2+y_4'^2 \\
           \end{array}
  \right|=0.$$
Since $x'_1=y'_1=0$, $x'_2=\frac{\sqrt{m^{2}+n^{2}}}{\|t\|} x_2, \ y'_2=0$ and
$$x_i'^2+y_i'^2=\|q_i\|^2=\|p_i\|^2-\frac{(\tg.p_i)^2}{\|\tg\|^{2}}$$
for $i=3,4$, we deduce that the points $p_1,p_2,p_3,p_4$ all belong to a cylinder  
of direction $\tg$ if and only if 
$$C_{p_1,p_2,p_3,p_4}(l,m,n)=0$$ 
where, after some \com{calculations},
\begin{multline*}
C_{p_1,p_2,p_3,p_4}(l,m,n)  :=  
x_2^2(m^2+n^2)\left|
           \begin{array}{ccc}
             l & x_3 & x_4 \\
             m & y_3 & y_4 \\
             n & 0 & z_4 \\
             \end{array} \right|  \\
	    - x_2\left|
            \begin{array}{ccc}
             m & y_3 & y_4 \\
             n & 0 & z_4 \\
             0 & \|\tg\|^{2}\|p_3\|^2-(t.p_3)^2 & \|\tg\|^{2}\|p_4\|^2-(t.p_4)^2 \\
             \end{array}\right|.	
\end{multline*}
This condition is given by a homogeneous polynomial of degree 3 in $l,m,n$. Observe that \com{this polynomial} is satisfied for the six particular directions corresponding to the lines $(p_{i}p_{j})$, $1\leq i< j \leq 4$. To determine the cylinders through the five points 
$p_{1},\ldots,p_{5}$, we consider the polynomial system of equations
\begin{equation}\label{eq:C12345}
	C_{p_{1},p_{2},p_{3},p_{4}}(l,m,n)=C_{p_{1},p_{2},p_{3},p_{5}}(l,m,n)=0
\end{equation}
\com{Equation \ref{eq:C12345}} corresponds geometrically to the intersection of two cubic curves, which is composed of nine points whenever this intersection is finite. As already \com{noted}, the directions corresponding to the lines $(p_{1}p_{2})$, $(p_{1}p_{3})$ and $(p_{2}p_{3})$ are roots of these two equations, but \com{these roots} do not imply the existence of a cylinder interpolating $\Pc$, they have to be removed once (if there is indeed a cylinder in this direction, then it will appear as a multiple solution). Moreover, when a direction of a cylinder through $\Pc$ is found, then it is completely determined since its radius and a point on its axis are given by the center and the radius of the circle through the points $q_1,q_2,q_3,q_4,q_5$. Therefore, we have the following result.

\begin{thm}[\cite{Devillers02}] If the algebraic system of equations \eqref{eq:C12345} has a finite number of solutions, then there may be zero, two, four or six real cylinders through $\Pc$.	
\end{thm}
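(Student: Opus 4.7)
The plan is to apply Bézout's theorem in the projective plane $\mathbb{P}^2$ of directions $(l:m:n)$. Both $C_{p_1,p_2,p_3,p_4}$ and $C_{p_1,p_2,p_3,p_5}$ are homogeneous polynomials of degree $3$, so if the system \eqref{eq:C12345} has finitely many solutions (equivalently, the two cubics share no common component), Bézout gives exactly $3\cdot 3=9$ common zeros counted with multiplicity in $\mathbb{P}^2_{\mathbb{C}}$.

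First, I would account for the spurious common zeros. As noted right after \eqref{eq:C12345}, the three directions $(p_1p_2)$, $(p_1p_3)$ and $(p_2p_3)$ are roots of $C_{p_1,p_2,p_3,p_k}$ for \emph{every} choice of the fourth point $p_k$, hence are common zeros of both cubics; but they do not correspond to cylinders through $\Pc$ (they must be discarded). Using the explicit determinantal form of $C_{p_1,p_2,p_3,p_k}$ one checks that at each of these three directions the two cubics meet transversally, so each contributes intersection multiplicity exactly $1$. Subtracting them leaves $9-3=6$ genuine common zeros counted with multiplicity.

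Next, every remaining common direction $t$ truly yields a cylinder through $\Pc$: by construction the projections $q_1,\ldots,q_4$ are concyclic and $q_1,q_2,q_3,q_5$ are concyclic, but the two circles both lie in $\Pi$ and share the three distinct points $q_1,q_2,q_3$ (distinctness follows from the normalization \eqref{eq:hyp-p1p2p3p4p5} and the assumption of a finite solution set), so they coincide and all five projections are concyclic. The resulting cylinder is then unique: its axis is the line through the center of this circle with direction $t$, and its radius is the radius of the circle. Hence counting cylinders amounts to counting valid real directions.

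Finally, a parity argument concludes: the cubics have real coefficients, so their non-real common zeros come in complex-conjugate pairs. Among the six genuine solutions, the number of real ones therefore has the same parity as $6$, i.e.~is even, leaving the possibilities $0$, $2$, $4$ or $6$. The main technical point will be the transversality verification at the three spurious directions; this is the only place where the explicit form of $C_{p_1,p_2,p_3,p_k}$ and a local computation are really needed, and any failure of transversality would change the bookkeeping of the Bézout count.
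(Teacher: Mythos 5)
Your proposal is correct and follows essentially the same route as the paper: the B\'ezout count of $3\times 3=9$ for the two cubics, removal of the three spurious directions $(p_1p_2)$, $(p_1p_3)$, $(p_2p_3)$ leaving six solutions, and the complex-conjugate parity argument giving $0$, $2$, $4$ or $6$ real cylinders. The one point to keep in mind (which the paper also only states in passing) is that transversality at the spurious directions can fail precisely when a genuine cylinder has one of those directions, in which case the excess multiplicity there is what accounts for that cylinder, so each spurious direction is removed exactly once rather than with its full intersection multiplicity.
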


In \cite{Devillers02,BKM05,Lichtblau12}, various polynomial system solvers are used in order to extract the directions of the cylinders through $\Pc$. We will describe a new approach where we first simplify this algebraic system and then rely on eigen-computations.   

\com{To begin}, we observe that the three extraneous directions $(p_{1}p_{2})$, $(p_{1}p_{3})$ and $(p_{2}p_{3})$ which \com{we write as} 
$$(x_2,0,0), \ (x_3,y_3,0), \ (x_3-x_2,y_3,0),$$
are the common roots of both algebraic equations  
$$F:=nx_{2}=0, \ \ G:=(mx_3-mx_2-ly_3)(mx_3-ly_3)mx_2=0.$$
Then a result in algebraic geometry says that the two polynomials $C_{p_{1},p_{2},p_{3},p_{j}}$, $j=4,5$, belong to the ideal generated by $F$ and $G$; and suggests {explicitly computing} these membership relations. Computing the Euclidean pseudo-division of $C_{p_{1},p_{2},p_{3},p_{4}}$ and $C_{p_{1},p_{2},p_{3},p_{5}}$ by the polynomial $x_2n$, we get two homogeneous polynomials of degree 2,
 \begin{multline*}
 	D_j(l,m,n)=
  \left( -{ y_j}{{ y_3}}^{2}+{{ y_j}}^{2}{ y_3}+{{ z_j}}^{2
 }{ y_3} \right) {l}^{2} \\ 
 + \left(  \left( 2{ y_j}{ y_3}{ 
 x_3}-2{ y_j}{ x_j}{ y_3} \right) m+{ z_j} \left( { 
 x_2}{ y_3}-2{ x_j}{ y_3} \right) n \right) l+ \\
 \left( { 
 x_2}{ y_j}{ x_3}-{ x_2}{ x_j}{ y_3}-{ y_j}{{ 
 x_3}}^{2}+{{ x_j}}^{2}{ y_3}+{{ z_j}}^{2}{ y_3} \right) {m}^{2} \\
 +{ z_j} \left( -{ x_2}{ x_3}+{{ x_3}}^{2}+{{ y_3}}^{2}-2
 { y_j}{ y_3} \right) nm+ \\ 
 \left( { x_2}{ y_j}{ x_3}-{
  x_2}{ x_j}{ y_3}-{ y_j}{{ x_3}}^{2}+{{ x_j}}^{2}{
  y_3}-{ y_j}{{ y_3}}^{2}+{{ y_j}}^{2}{ y_3} \right) {n}^{2
 }
 \end{multline*}
where $j=4,5$, such that
$$
\left(
\begin{array}{cc}
D_{4} & z_{4} \\
D_{5} & z_{5} 
\end{array}
\right)
\left(
\begin{array}{c}
F \\ G
\end{array}
\right)
=
\left(
\begin{array}{c}
 C_{p_{1},p_{2},p_{3},p_{4}} \\ C_{p_{1},p_{2},p_{3},p_{5}}
\end{array}
\right).
$$

\begin{prop}\label{eq:systCD} If $z_4\neq 0$ (resp.~$z_5\neq0$), then the directions of the cylinders interpolating $\Pc$ \com{correspond to} all the common roots of both equations $C_{p_1,p_2,p_3,p_4}(l,m,n)=0$  and 
	$\Delta(l,m,n):=z_{5}D_{4}(l,m,n)-z_{4}D_{5}(l,m,n) = 0$ (resp. $C_{p_1,p_2,p_3,p_5}(l,m,n)=0$ and $\Delta(l,m,n)=0$).	
\end{prop}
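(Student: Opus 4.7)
The plan hinges on reading the matrix identity
$$\begin{pmatrix} D_4 & z_4 \\ D_5 & z_5 \end{pmatrix}\begin{pmatrix} F \\ G \end{pmatrix}=\begin{pmatrix} C_{p_1,p_2,p_3,p_4} \\ C_{p_1,p_2,p_3,p_5} \end{pmatrix}$$
as giving an elimination of $G$. First I would multiply the first equation by $z_5$ and the second by $z_4$ and subtract, obtaining the polynomial identity
$$z_5\,C_{p_1,p_2,p_3,p_4}-z_4\,C_{p_1,p_2,p_3,p_5}=(z_5 D_4-z_4 D_5)F=\Delta\cdot F.$$
Assuming $z_4\neq 0$, this identity shows that the ideals $(C_{p_1,p_2,p_3,p_4},C_{p_1,p_2,p_3,p_5})$ and $(C_{p_1,p_2,p_3,p_4},\Delta\cdot F)$ coincide, so their zero sets in $\mathbb{P}^2$ agree. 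Hence
$$V(C_{p_1,p_2,p_3,p_4},C_{p_1,p_2,p_3,p_5}) \;=\; V(C_{p_1,p_2,p_3,p_4},\Delta)\;\cup\; V(C_{p_1,p_2,p_3,p_4},F).$$

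Next I would analyse the spurious branch. Since $F=n\,x_2$ and we may assume $x_2\neq 0$ (else $p_1=p_2$), the set $V(C_{p_1,p_2,p_3,p_4},F)$ consists of the projective roots of the univariate cubic $C_{p_1,p_2,p_3,p_4}(l,m,0)$. The geometric analysis preceding the proposition already identifies three such roots, namely the directions of the lines $(p_1p_2)$, $(p_1p_3)$ and $(p_2p_3)$, all of which lie in the plane $\{n=0\}$. By B\'ezout, a degree~3 curve meets the line $\{n=0\}$ in exactly three points counted with multiplicity, so these three spurious directions exhaust $V(C_{p_1,p_2,p_3,p_4},F)$. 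Consequently the true cylinder directions, being the points of $V(C_{p_1,p_2,p_3,p_4},C_{p_1,p_2,p_3,p_5})$ that are not spurious, must all lie in $V(C_{p_1,p_2,p_3,p_4},\Delta)$.

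For the converse inclusion I would read the identity in the other direction: if $(l,m,n)$ satisfies $C_{p_1,p_2,p_3,p_4}=0$ and $\Delta=0$, then $\Delta\cdot F=0$ and the identity forces $z_4\,C_{p_1,p_2,p_3,p_5}=0$; since $z_4\neq 0$, we get $C_{p_1,p_2,p_3,p_5}=0$, so $(l,m,n)$ is a common zero of $C_{p_1,p_2,p_3,p_4}$ and $C_{p_1,p_2,p_3,p_5}$. It is then either a cylinder direction or one of the three spurious directions; a B\'ezout count closes the argument, since $V(C_{p_1,p_2,p_3,p_4},\Delta)$ has at most $3\cdot 2=6$ points (with multiplicity), which is exactly the expected number of cylinder directions ($9-3$) coming from $V(C_{p_1,p_2,p_3,p_4},C_{p_1,p_2,p_3,p_5})$ after removal of the three spurious ones. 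So no room remains for a spurious direction to sneak into $V(C_{p_1,p_2,p_3,p_4},\Delta)$.

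The delicate point in this scheme is the B\'ezout counting at the last step: one must ensure that the relevant intersections are proper (i.e.\ $C_{p_1,p_2,p_3,p_4}$ and $\Delta$ share no common factor, and similarly for $C_{p_1,p_2,p_3,p_4}$ and $F$), and that multiplicities are handled correctly so that the three spurious directions are truly disjoint from $V(C_{p_1,p_2,p_3,p_4},\Delta)$. In a fully rigorous write-up I would either invoke a genericity assumption on $\Pc$ to guarantee these transversality conditions, or evaluate $\Delta$ directly at each of the three spurious directions $(x_2,0,0)$, $(x_3,y_3,0)$ and $(x_3-x_2,y_3,0)$ to confirm non-vanishing as an open condition on the input points.
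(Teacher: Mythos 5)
Your proof is correct and follows essentially the same route as the paper: both use the identity $z_5\,C_{p_1,p_2,p_3,p_4}-z_4\,C_{p_1,p_2,p_3,p_5}=\Delta\cdot F$ to split the common zeros into the three spurious directions on $\{n=0\}$ and the residual system $C_{p_1,p_2,p_3,p_4}=\Delta=0$, the paper merely being terser about the ideal equality and the B\'ezout count. The delicate point you flag at the end is handled in the paper's next subsection exactly as you suggest, by evaluating $\Delta$ at the three spurious directions (a vanishing there being read as a genuine cylinder whose axis is parallel to one of the lines $(p_ip_j)$, which appears as a multiple root of the original system).
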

\begin{proof} If $z_4\neq 0$ then the system \eqref{eq:C12345} is obviously equivalent to $C_{p_1,p_2,p_3,p_4}=0$ and
	\begin{multline*}
	z_4C_{p_1,p_2,p_3,p_5}(l,m,n)-z_5C_{p_1,p_2,p_3,p_4}(l,m,n)= \\ nx_2(z_4D_5-z_5D_4)=0.
	\end{multline*}
The solutions corresponding to $n=0$ satisfy $n=G=0$ so that they define exactly the three extraneous 	directions $(p_{1}p_{2})$, $(p_{1}p_{3})$ and $(p_{2}p_{3})$. 
Therefore, the remaining polynomial system 
\begin{equation}\label{eq:andre}
C_{p_1,p_2,p_3,p_4}(l,m,n)=0, \ \ \Delta(l,m,n)=0	
\end{equation}
gives exactly the six solutions of interest. The case $z_5\neq 0$ is treated similarly.
\end{proof}

\com{Notice} that the case where $z_4=z_5=0$ corresponds geometrically to five coplanar points. If these points are also aligned, \com{there are} infinitely many cylinders. 
\com{
If they belong to an \com{ellipse}, there are two \com{non-degenerate} cylinders \com{through} these points (they are symmetric with respect to the plane $z=0$); if this ellipse is a circle these two solutions coincide. Otherwise there are no cylinders through these points.
In all cases, \com{these points} belong to a \com{degenerate} ``flat'' cylinder (infinite radius).
}
\subsubsection{Solving via eigencomputations}\label{subsec:cyl5P}

\com{Now we describe a method that allows us to compute efficiently the roots of the system of algebraic equations given in Proposition \ref{eq:systCD} as the eigenvalues of a pencil of matrices given in closed form. This approach is based on known techniques (see e.g.~\cite{Stetter:1996:MEH:242961.242966}, \cite[\S 1]{Cox05})  \lb{that allow us} to recover the solutions in a single computation step (\lb{similar to} \cite[Appendix]{Devillers02} or \cite{BKM05}) \lb{with good control on}  numerical stability and accuracy \cite{EM95}. Hereafter, \lb{we provide, for the convenience of the reader, a short review, adapted to our context, of this matrix-based solution method}.} 

\medskip

First, in order to treat \com{separately} the case $n=0$, observe that the common roots of $C_{p_1,p_2,p_3,p_5}=0$ and $\Delta=0$ such that $n=0$ are easily computed since \com{these roots} are among the three directions $(p_{1}p_{2})$, $(p_{1}p_{3})$ and $(p_{2}p_{3})$ (i.e.~the roots of $n=G=0$) and the decision is given by the evaluation of the equation $\Delta(l,m,n)$ at these directions. For instance, the direction $(1,0,0)$ is a solution if and only if 
\begin{multline*}
\Delta(1,0,0)= z_5\left(-{ y_4}{{ y_3}}^{2}+{{ y_4}}^{2}{ y_3}+{{ z_4}}^{2
 }{ y_3} \right) \\ - z_4\left( -{ y_5}{{ y_3}}^{2}+{{ y_5}}^{2}{ y_3}+{{ z_5}}^{2
 }{ y_3} \right)=0.	
\end{multline*}
Therefore, from now on we set $n=1$. 

We consider the Sylvester Matrix $S$ of the polynomials $C_{p_1,p_2,p_3,p_5}(l,m,1)$ and $\Delta(l,m,1)$ seen as univariate polynomials in the variable $m$. This matrix is a polynomial matrix of degree 2 in $l$. More precisely, \com{this matrix} is of the form $M_2l^2+M_1l+M_0$ where each $M_i$ is a $5\times 5$-matrix whose coefficients are given in closed forms in terms of $\Pc$. For instance, 
$$
M_2:=\left( 
\begin {array}{ccccc} 0&0&{  x_2}\,{{  y_3}}^{2}{  
z_4}& b &0
\\ \noalign{\medskip}0&0&0&{  x_2}\,{{  y_3}}^{2}{  z_4}&b 
\\ \noalign{\medskip}0&0& a &0&0
\\ \noalign{\medskip}0&0&0&a&0
\\ \noalign{\medskip}0&0&0&0&a
\end {array} \right) 
$$
where 
\begin{multline*}
a:=-{{  y_3}}^{2}{  y_4}\,{  z_5}+{{
  y_3}}^{2}{  z_4}\,{  y_5}+{  y_3}\,{{  y_4}}^{2}{  
z_5}-{  y_3}\,{  z_4}\,{{  y_5}}^{2} \\ +{  y_3}\,{{  z_4}
}^{2}{  z_5} -{  y_3}\,{  z_4}\,{{  z_5}}^{2}
\end{multline*}
and
$$b:=-{  x_2}\,{{  y_3}}^{2}{  y_4}+{  x_2}\,{  y_3}\,
{{  y_4}}^{2}+{  x_2}\,{  y_3}\,{{  z_4}}^{2}.$$	
Now, following \cite{BKM05}, we linearize this polynomial matrix by considering its companion matrices that are defined by 
$$
 A= \left( \begin{array}{cc}
0 & I  \\ M_0^t & M_1^t \end{array}\right), \ \ 
B= \left( \begin{array}{cc}
I & 0 \\ 0 & -M_{2}^t  \end{array}\right)
$$
where $I$ stands for the $5\times 5$-identity matrix and $M^t_i$ stands for the transpose of the matrix $M_i$. These matrices are of size $10\times 10$ and their main feature and importance is the fact that for all $\lambda \in \mathbb{C}$ and all vector $v\in \mathbb{C}^m$ we have
$$S^t(\lambda)v=0\Leftrightarrow (A-\lambda B) \left( \begin{array}{c} v\\\lambda v \end{array}
\right)=0.$$ 
In other words, the solutions to the system 
$$C_{p_1,p_2,p_3,p_4}(l,m,1)=0, \ \ \Delta(l,m,1)=0$$ 
can be computed from the eigenvalues and eigenvectors of the pencil $A,B$ which is given in closed form in terms of the input data $\Pc$. We refer the reader to \cite{BKM05} for more details on these computations. 
 
\subsubsection{Extraction algorithm} 

From the above analysis, we deduce the following extraction algorithm \com{for cylinders} through a point set $\Pc$:
\begin{enumerate}
	\item Perform a linear change of coordinates so that the five points are of the form \eqref{eq:hyp-p1p2p3p4p5}. 
	\item If $z_4=z_5=0$ the points are coplanar and the algorithm stops here.  
	\item Assume $z_4\neq 0$. \com{Then instantiate} the pre-computed matrices $M_0, M_1, M_2$, and hence  $A,B$, with the coordinates of the points $p_i$, $i=1,\ldots,5$. 
	\item Compute the six finite eigenvalues of the pencil $A,B$ (see \cite{BKM05} for details), and sort them in order to keep only those \com{that} are real numbers (up to a given precision).  
	\item For each real eigenvalue obtained \com{in Step 4}, compute the remaining coordinate of the direction by means of the associated eigenvectors (see \cite{BKM05} for details); keep those directions that are given by real numbers.
	\item For each real direction of a cylinder through $\Pc$, compute $q_1,q_2,q_3$ then the radius and the center of their circumcircle.
\end{enumerate}

This algorithm has been implemented with the {\tt Maple} software. We observed that the computation of the cylinders through a random set of points takes \com{on average} 15ms \com{(including all the steps in the above algorithm)} and is almost constant,   \com{i.e.~independent} of the point set. The proportion of the number of cylinders found through random point sets is given in Table \ref{tab:cylinders5P}; \com{we notice that we recover essentially those that appear in \cite[\S 4.1]{Lichtblau12}}. Some illustrative configurations are presented in Figure \ref{fig:cyl5P}.

\begin{figure}[ht!]
\centering	
   \includegraphics[width=50mm]{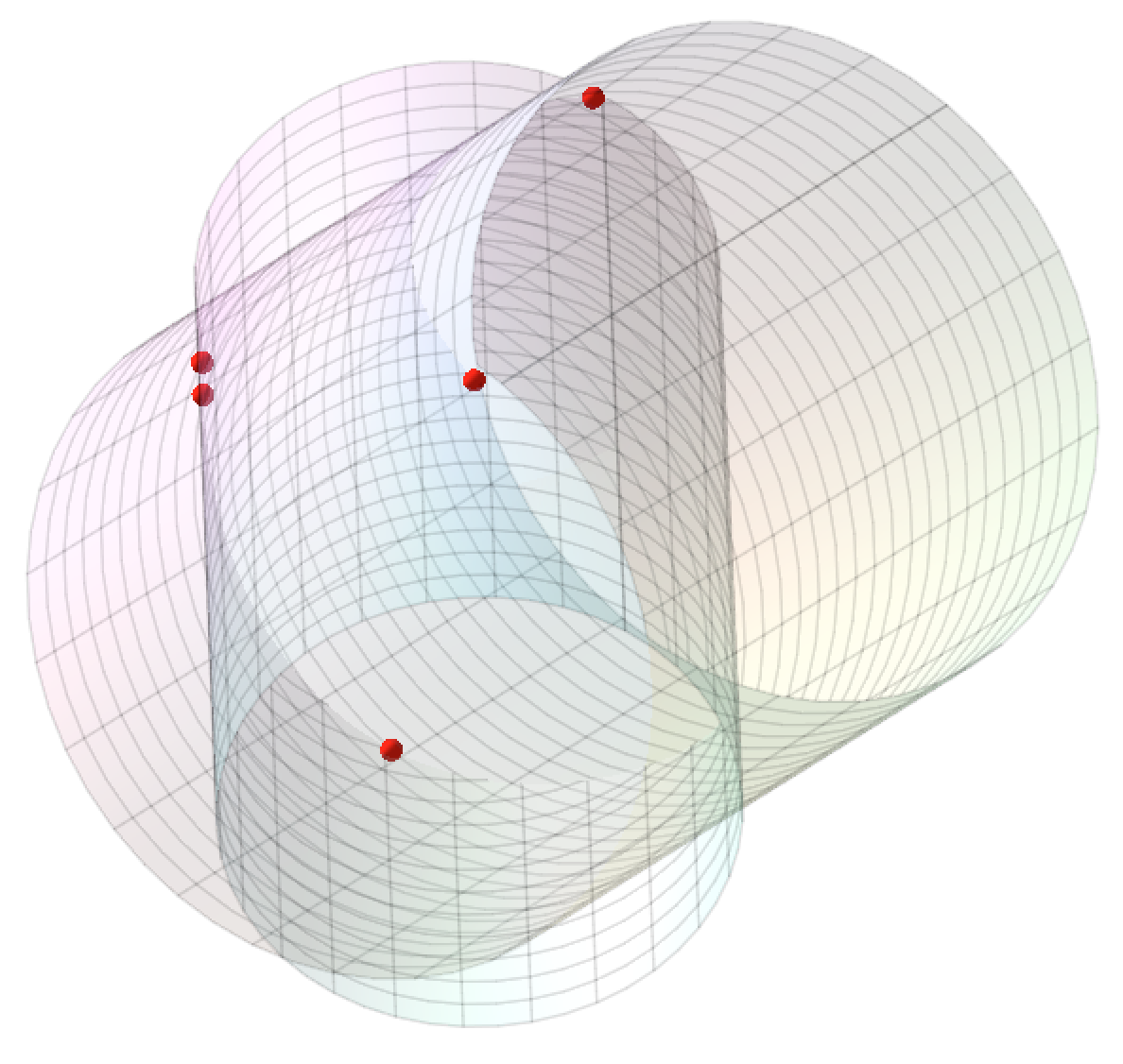}
   \hspace{4em}
   \includegraphics[width=60mm]{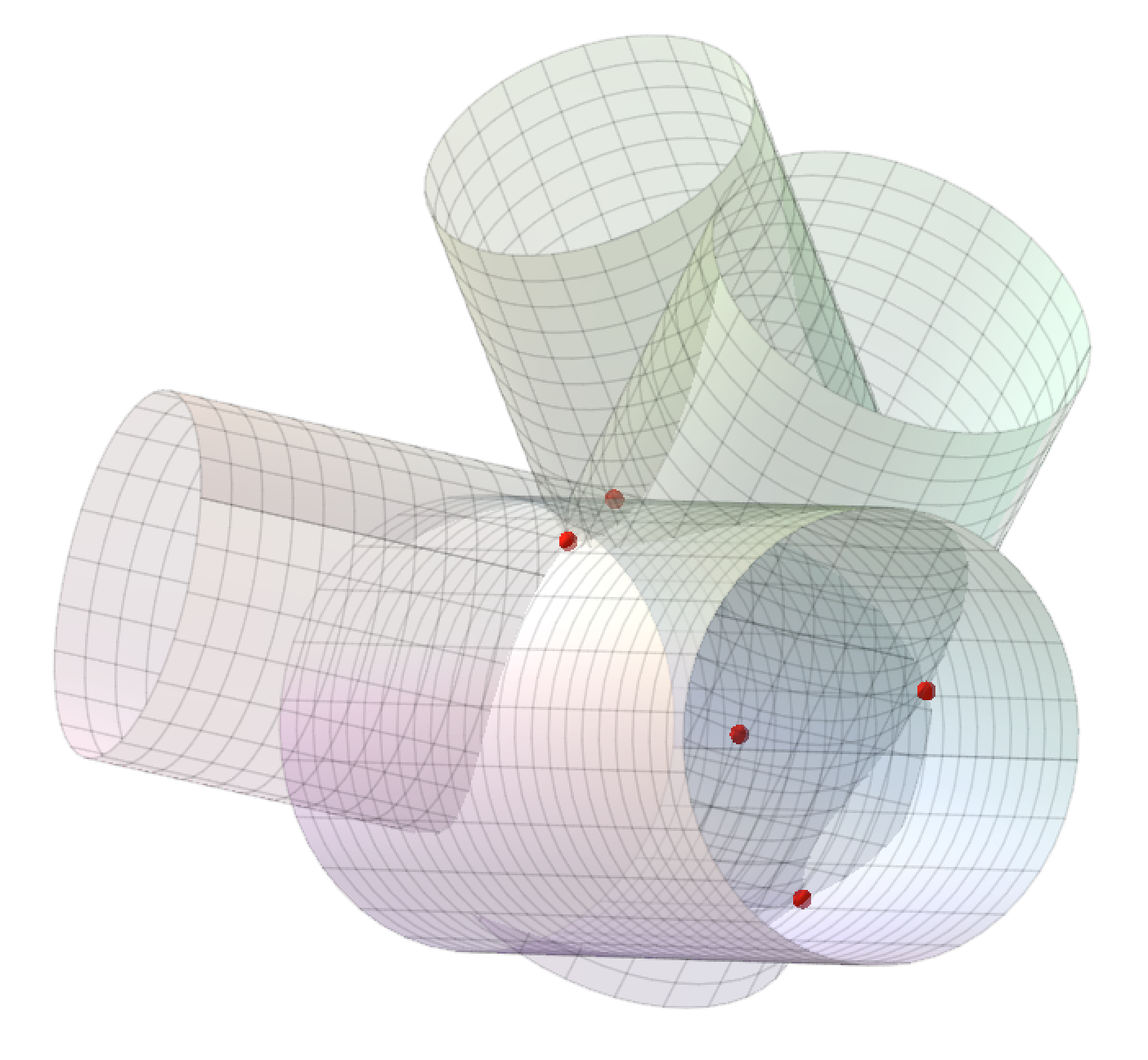}
   \hspace{4em}
   \includegraphics[width=60mm]{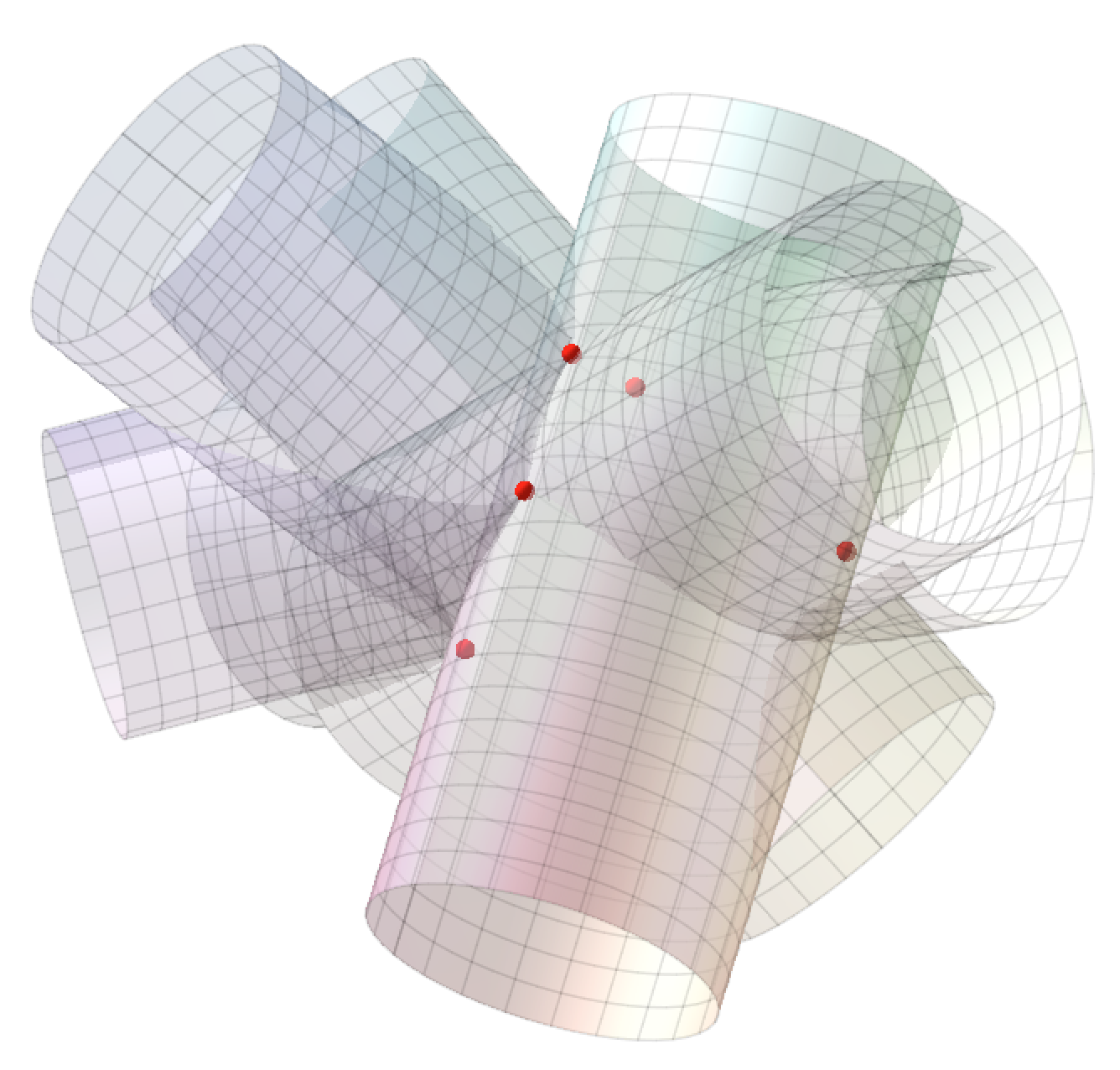}
   \caption{Three configurations of $\Pc$ with 2,4 and 6 cylinders passing through $\Pc$.}\label{fig:cyl5P}
\end{figure}

\begin{table}[ht!]
\begin{center}
\begin{tabular}{c|c|c|c|c|}
 Number of cylinders & 0 & 2 & 4 & 6 \\ 
	 \hline
Proportion (\%)	 &  22.7 & 53.9 & 21.4 & 2 \\
\hline	
\end{tabular}
\caption{Proportion of the number of cylinders found through a thousand random point sets.}\label{tab:cylinders5P}
\end{center}
\end{table}

\section{Interpolation of cones}	

A cone (more precisely a right circular cone) is a three-dimensional geometric shape that consists of the locus of all straight lines (the generatrices) joining a fixed point, called the \emph{apex} to the points of a circle, the apex lying on the line, called the \emph{axis}, passing through the center of the circle and normal to the plane containing this circle. A cone is determined by six parameters : four parameters for the axis, an additional parameter for the apex, which is a point on the axis and a last parameter for the angle of the cone, that is to say the angle made at the apex between the axis and any of the straight lines generating the cone.  

\medskip

\com{This approach} is classical to determine a cone from three oriented points. Indeed, each normal plane to each oriented point must contains the apex, so the apex can be determined as the intersection of three such planes. Nevertheless, this approach is overdetermined in the sense that it will not yield a cone in general \com{because} three general oriented points give nine conditions whereas a cone has only six parameters. Hereafter, we provide methods for interpolating a cone through a mixed point set that define exactly six conditions. There are three possibilities \com{for such} mixed point sets : either two oriented points, or one oriented point and three other distinct points, or six distinct points.

\subsection{Cones through two oriented points} 

We seek for cones through a set $\Pc$ of two oriented points that we will denote by $p_1,n_1$ and $p_2,n_2$. By a linear change of coordinates, one can assume without loss of generality that $p_1:=(0,0,0)$ and  $n_1:=(0,0,1)$. For the remaining data, we set $p_2=(x_2,y_2,z_2)$ and $n_2=(a_2,b_2,c_2)$.

\medskip

Assuming that $n_2$ is not proportional to $n_1$, i.e.~$a_2\neq 0$ or $b_2\neq 0$, the normal plane to $n_1$ through $p_1$ and the normal plane \com{$\Pi_2$} to $n_2$ through $p_2$ intersect along a line $\Lc$. A key remark is that the apex of a cone through $\Pc$ must be on $\Lc$. This line is defined by the equations 
$$\Lc : \left\{\begin{array}{l}
z=0\\
a_2x+b_2y+c_2z-p_2.n_2=0.	
\end{array}\right.
$$
If $n_2$ is proportional to $n_1$, then we get a \com{degenerate} situation. Indeed, either $z_2\neq 0$ and there is no cone through $\Pc$, \com{or} $z_2=0$ and there are infinitely many cones through $\Pc$ (the line through $p_1$ and $p_2$ define a generatrix of the cone and then the apex can be freely chosen on this line, as well as the angle of the cone). \com{So from} now on we  assume that $n_2$ is not proportional to $n_1$, i.e.~$a_2^2+b_2^2>0$.

\medskip 

The axis of a cone through $\Pc$ must intersect the line $\Lc_1$ through $p_1$ with direction $n_1$ and also intersect the line $\Lc_2$ through $p_2$ with direction $n_2$. So to characterize this axis, we pick a point on each of these lines, say $q_1=\lambda_1 n_1 \in \Lc_1$ and $q_2=p_2+\lambda_2n_2 \in \Lc_2$ where $\lambda_1, \lambda_2 \in \RR$, and we consider the line $\Ac$ through $q_1$ and $q_2$. The only special case when $\Ac$ is not well defined is when 
$\Lc_1$ and $\Lc_2$ intersect and both $q_1$ and $q_2$ are located at this intersection point. For the sake of clarity, we will treat this case separately and we assume for the moment that $\Lc_1$ and $\Lc_2$ do not intersect. 

The line $\Ac$ will be the axis of a cone through $\Pc$ providing the two following conditions hold: 
\begin{itemize}
	\item $\Ac$ \com{intersects} $\Lc$, in which case the intersection point is the apex of the cone,
	\item the angle between $n_1$ and $\Ac$ must be equal to the angle between $n_2$ and $\Ac$.
\end{itemize}
$\Ac$ can be parameterized as the set of points $(1-\lambda)q_1+\lambda q_2$ for $\lambda \in \RR$. It is necessary to have $z_2+\lambda_2c_2-\lambda_1\neq 0$ for otherwise $\Ac$ is parallel to the plane $z=0$ and hence it does not intersect $\Lc$, unless $\lambda_1=z_2+\lambda_2c_2=0$ but in this case the apex would be $p_1$ which is impossible (normal vector is not well defined at the apex). So, assuming $z_2+\lambda_2c_2-\lambda_1\neq 0$, the intersection point between $\Ac$ and the plane $z=0$ is the \com{point with} coordinates
\begin{equation}\label{eq:linesInter}
	\left(-\lambda_1\frac{x_2+\lambda_2a_2}{z_2+\lambda_2c_2-\lambda_1} , -\lambda_1 \frac{y_2+\lambda_2b_2}{z_2+\lambda_2c_2-\lambda_1}, 0\right).
\end{equation}
\com{Substituting this point into $\Lc$, we} get the following first constraint on $\lambda_1$ and $\lambda_2$ : 
\begin{equation}\label{eq:cone2P-E1}
	\lambda_1\lambda_2(a_2^2+b_2^2)-\lambda_1z_2c_2 + (p_2.n_2)(z_2+\lambda_2c_2)=0.
\end{equation}
\com{Now denote} by $q$ the vector $q_2-q_1$\com{;} the second condition is obtained by imposing 
$q.n_1=\pm q.n_2/\|n_2\|,$ the sign change \com{arises because} the normals are supposed to be oriented with respect to the interpolating cone. More explicitly, we obtain the equation
\begin{equation}\label{eq:cone2P-E2}
	\rho(z_2+\lambda_2c_2-\lambda_1) = \varepsilon (-c_2\lambda_1+\rho^2\lambda_2+p_2.n_2),
\end{equation}
where $\rho=\|n_2\|=\sqrt{a_2^2+b_2^2+c_2^2}$ and $\varepsilon=\pm 1$. 
This latter condition being linear in $\lambda_1$ and $\lambda_2$, one can solve the system of equations $\eqref{eq:cone2P-E1}$ and $\eqref{eq:cone2P-E2}$ in closed form. More precisely, once $\varepsilon$ is fixed, we get the \com{following two} solutions~:  
  \begin{equation*}
    \left\{
    \begin{aligned}
    &\lambda_1' = \frac{a_2^2z_2-a_2c_2x_2+b_2^2z_2-b_2c_2y_2}{a_2^2+b_2^2}\\
    &\lambda_2' = -\frac{a_2x_2+b_2y_2}{a_2^2+b_2^2}\\
    \end{aligned}
    \right., \ \ 
    \left\{
    \begin{aligned}
    &\lambda_1 = \frac{p_2.n_2}{c_2 + \varepsilon \rho}\\
    &\lambda_2 = - \frac{z_2}{c_2 +\varepsilon \rho }\\
    \end{aligned}
    \right..
  \end{equation*}
Observe that $|c_2|\neq \rho$ since it is assumed that $n_2$ is not parallel to $n_1$. Another important observation is that the solution ($\lambda_1',\lambda_2'$) is independent of $\varepsilon$ and  moreover satisfies $z_2+\lambda_2'c_2-\lambda_1'=0$. Therefore, \com{($\lambda_1',\lambda_2'$)} is not a valid solution. \com{So we} are left with two solutions to our geometric interpolation problem, namely 
 \begin{equation}\label{eq:conemixedsol}
   \left\{
   \begin{aligned}
   &\lambda_1 = \frac{p_2.n_2}{c_2 - \rho}\\
   &\lambda_2 = - \frac{z_2}{c_2 - \rho }\\
   \end{aligned}
   \right., \ \
   \left\{
   \begin{aligned}
   &\lambda_1 = \frac{p_2.n_2}{c_2 +  \rho}\\
   &\lambda_2 = - \frac{z_2}{c_2 + \rho }\\
   \end{aligned}
   \right..
 \end{equation}
Once such a solution is chosen, then one can determine a unique cone through $\Pc$ : its apex is given by \eqref{eq:linesInter}, its direction is given by $q=q_2-q_1$ and its angle is given by the angle between the vectors $p_1-apex$ and $q$.

\medskip

It remains to treat the case where the lines $\Lc_1$ and $\Lc_2$ intersect. Let us denote by $\Pi$ the plane that contains these two lines, by $\Nc_1$ the normal line in $\Pi$ to $\Lc_1$ through $p_1$, by $\Nc_2$ the normal line in $\Pi$ to $\Lc_2$ through $p_2$ and by $\omega$ the intersection point between $\Nc_1$ and $\Nc_2$. We notice that $\omega$ is nothing but the intersection point between $\Pi$ and $\Lc$; see Figure \ref{fig:schema} for a geometric illustration.

First, we observe that there are always two cones throu\-gh $\Pc$. These two cones have the same apex $\omega$ and their \com{axes} are the perpendicular bisectors of the lines $\Nc_1$ and $\Nc_2$ through $\omega$. Therefore, these two cones are symmetric with respect to the plane $\Pi$ and their intersection is composed of the two lines $\Nc_1\cup \Nc_2$. \com{There are} no other cones through $\Pc$ \com{whose} axis is contained in $\Pi$ (equivalently whose apex is $\omega$).

\lb{Next we look} for cones through $\Pc$ \lb{whose} apex $\alpha$ is such that $\alpha \in \Lc \setminus \{\omega\}$.  Since the axis of such a cone must intersect $\Lc_1$ and $\Lc_2$ and go through $\alpha$ which is not in $\Pi$, then \com{the axis} must be the line through $\alpha$ and $q:=\Lc_1\cap \Lc_2 \in \Pi$ (recall that it is assumed that $n_2$ and $n_1$ are not proportional). 
\com{The two triangles $\alpha q p_1$ and $\alpha q p_2$ are right angled at $p_1$, respectively at $p_2$, since $\alpha$ belongs to $\Lc=\Pi_2\cap\{z=0\}$. Therefore, a necessary and sufficient condition for the existence of a cone through $\Pc$, and whose axis is the line through $\alpha$ and $q$, is that the angles $(\alpha q, \alpha p_1)$ and $(\alpha q, \alpha p_2)$ are the same. This condition is equivalent to the equality $\|qp_1\|=\|qp_2\|$ since the two right triangles $\alpha q p_1$ and $\alpha q p_2$ share the edge $\alpha q$.
}


 \begin{figure}
	\hspace{-1.5em}
	\includegraphics[width=100mm]{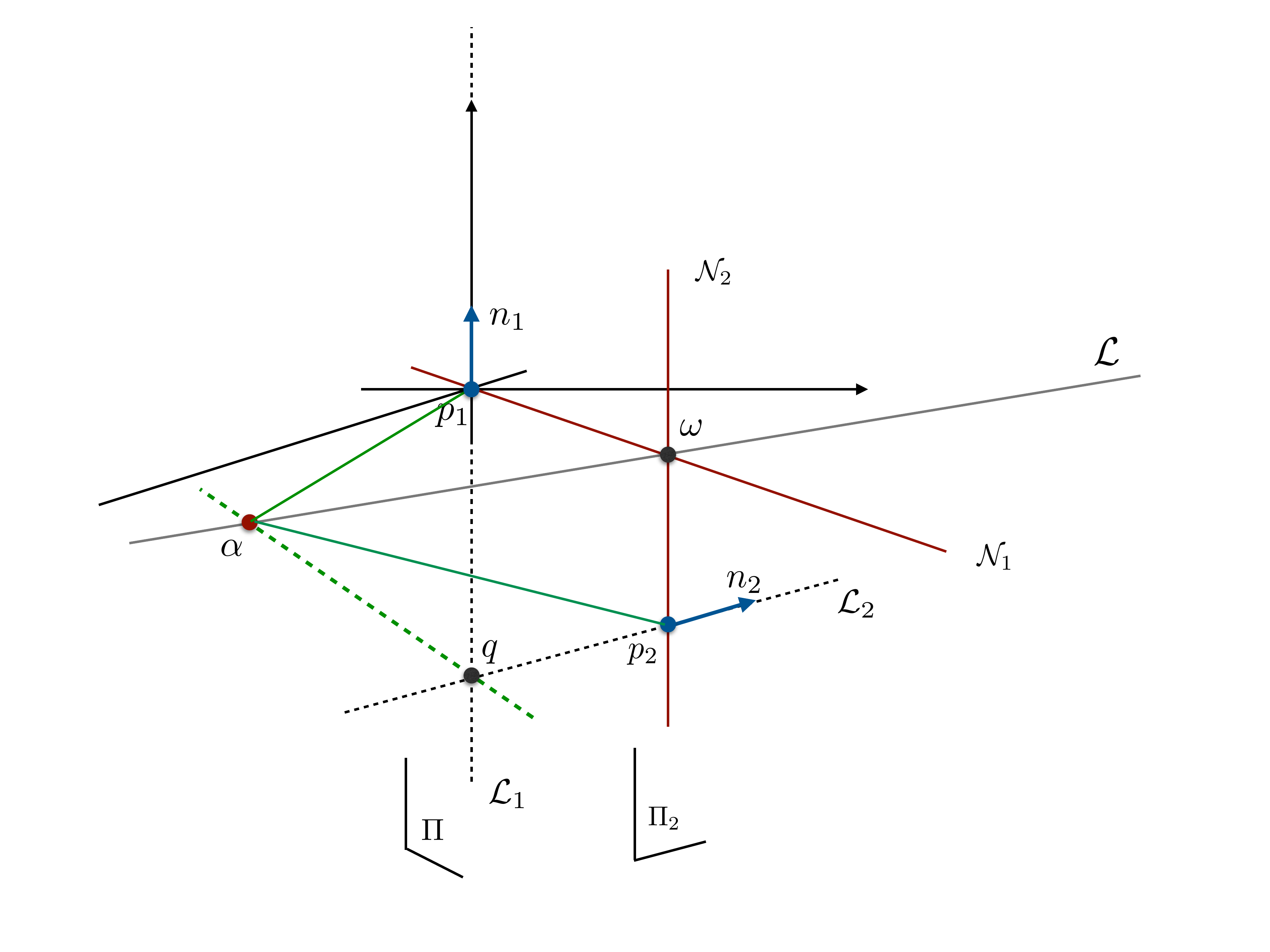}
    \caption{Geometric construction in the case where the lines $\Lc_1$ and $\Lc_2$ intersect.}\label{fig:schema}
 \end{figure}

In summary, when the lines $\Lc_1$ and $\Lc_2$ intersect there always exists two cones through $\Pc$, whose axis are contained in $\Pi$. If $\|qp_1\|\neq \|qp_2\|$ then there are no more cones through $\Pc$, otherwise there are infinitely many cones through $\Pc$, more precisely a 1-dimensional family of cones which is parameterized by $\Lc$. A last observation is that if $\|qp_1\|=\|qp_2\|$ then \com{one of the two} perpendicular bisectors of $\Nc_1$ and $\Nc_2$ through $\omega$ goes through $q$, so that there is a continuity in this family of cones through $\Pc$ when $\alpha$ is moving on the line $\Lc$.

\begin{thm} \lb{Given} an oriented set of two distinct points \lb{$\Pc$, denote} by $\Lc$ the line through $p_1$ with direction $n_1$ and by $\Lc_2$ the line through $p_2$ with direction $n_2$.
	If the \lb{following two} conditions \lb{hold~:}
	\begin{itemize}
		\item[i)] $n_1$ and $n_2$ are not proportional, 
		\item[ii)] if the lines $\Lc_1$ and $\Lc_2$ intersect at a point $q$ then $\|qp_1\|\neq \|qp_2\|$,
	\end{itemize}
then there are exactly two real cones through $\Pc$. Moreover, the intersection of these two cones consists only of the points $p_1$ and $p_2$, except if they share the same apex in which case their intersection consists of the lines $\Nc_1$ and $\Nc_2$.  
	
	If $n_1$ is parallel to $n_2$ then there is no cone through $\Pc$ unless $p_2$ belongs to the \lb{line} $\Nc_1$, in which case there are infinitely many cones through $\Pc$.
	
	If the lines $\Lc_1$ and $\Lc_2$ intersect at a point $q$ such that $\|qp_1\|=\|qp_2\|$, then there are infinitely many cones through $\Pc$.
\end{thm}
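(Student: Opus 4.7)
The plan is to split the argument into the three regimes listed in the statement and, in each, rely on the geometric constructions carried out in the preceding discussion. We may and do assume the normalization $p_1=(0,0,0)$, $n_1=(0,0,1)$ throughout.

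In the generic regime (conditions i and ii), the axis of any interpolating cone must meet $\Lc_1$ and $\Lc_2$, so I would parametrize its two defining points by $(\lambda_1,\lambda_2)$ as in the body of the text, and recall that the two geometric requirements---that the line $\Ac$ meets $\Lc$ and that it makes equal angles with $n_1$ and $n_2$ up to a sign $\varepsilon=\pm 1$---are precisely \eqref{eq:cone2P-E1} and \eqref{eq:cone2P-E2}. Solving this affine-linear system in closed form produces four candidate pairs. The key step is to observe that two of them collapse to the single pair $(\lambda_1',\lambda_2')$ satisfying $z_2+\lambda_2'c_2-\lambda_1'=0$, i.e.~the candidate axis becomes parallel to the plane $z=0$ and never meets $\Lc$; this pair is thus spurious. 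The two surviving pairs \eqref{eq:conemixedsol} are both well defined under condition i (since $|c_2|\neq \rho$), and each one determines a unique cone through the apex formula \eqref{eq:linesInter}, yielding exactly two real cones.

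For the special configuration where $\Lc_1$ and $\Lc_2$ intersect at a point $q$, the above parametrization breaks down and I would argue as in the body of the text: the two perpendicular bisectors of $\Nc_1$ and $\Nc_2$ through the point $\omega=\Lc\cap\Pi$ always supply two cones with common apex $\omega$, and any further cone must have apex $\alpha\in\Lc\setminus\{\omega\}$ with axis through $q$. Using the two right triangles $\alpha q p_1$ and $\alpha q p_2$ sharing the edge $\alpha q$, the equal-angle condition at $\alpha$ reduces to $\|qp_1\|=\|qp_2\|$. This dichotomy yields directly the count asserted in condition ii and in the third paragraph of the theorem. The case $n_1\parallel n_2$ is then handled by inspection of the two normal planes at $p_1$ and $p_2$: they are either disjoint parallel planes (no cone possible, since the apex must lie on both) or identical (in which case $p_1,p_2$ lie in a common normal plane, they determine a single generatrix, and any apex on the line $(p_1p_2)$ together with any half-angle produces an interpolating cone).

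The main obstacle is the intersection statement for the two cones in the generic case. I would analyze it by noting that each of the two cones is tangent to the plane $\{z=0\}$ at $p_1$ and to $\Pi_2$ at $p_2$, so as quadrics in $\PP^3$ they meet with local intersection multiplicity at least two at each of these two points. Since the total intersection of two distinct quadrics in $\PP^3$ has degree four, these tangencies absorb the entire intersection degree, and the residual intersection must be either empty or purely complex, leaving only the real points $\{p_1,p_2\}$ in $\RR^3$. When the two apices coincide at $\omega$, the two cones are by construction mirror-symmetric with respect to the plane $\Pi$, hence their intersection must contain---and, by the same degree count, equals---the two lines $\Nc_1\cup\Nc_2$ lying in that plane of symmetry. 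I expect rigorously verifying these two sub-claims (either via the degree argument sketched here or via an explicit resultant/Bezout computation in the reduced coordinates) to be the delicate part of the proof.
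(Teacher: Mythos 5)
Your treatment of the counting claims follows the paper's own argument step for step: the paper establishes this theorem not inside a proof environment but through the discussion that precedes it, and your proposal reproduces that discussion --- the parametrization of the candidate axis by $(\lambda_1,\lambda_2)$, the two constraints \eqref{eq:cone2P-E1}--\eqref{eq:cone2P-E2}, the identification and rejection of the $\varepsilon$-independent solution $(\lambda_1',\lambda_2')$ as the one whose axis is parallel to the plane $z=0$, the two surviving closed-form solutions \eqref{eq:conemixedsol}, the separate analysis of the case $\Lc_1\cap\Lc_2=\{q\}$ via the apex $\omega$ and the two right triangles $\alpha q p_1$, $\alpha q p_2$ sharing the edge $\alpha q$, and the dichotomy for $n_1$ parallel to $n_2$. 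On all of these points you are aligned with the source.

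The one place you go beyond the paper is the claim that, generically, the two cones meet only in $\{p_1,p_2\}$ --- and there your argument does not work as stated. Two distinct quadric surfaces in $\PP^3$ intersect in a curve of degree $4$, not in a zero-dimensional scheme, so there is no ``intersection degree'' for the two tangencies to absorb: tangency of the two cones at $p_1$ and at $p_2$ only forces the quartic intersection curve to be singular at those two points, and says nothing by itself about the rest of that curve. What the two tangencies actually buy you is reducibility (a $(2,2)$ complete-intersection curve has arithmetic genus $1$, so two singular points force it to split, typically into two conics through $p_1$ and $p_2$); one must then show that the resulting components carry no real points other than $p_1$ and $p_2$. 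To be fair, the paper asserts this part of the theorem without proof --- the surrounding text only justifies the coincident-apex case, where the symmetry with respect to $\Pi$ yields $\Nc_1\cup\Nc_2$ --- so you are not missing an argument the paper supplies; but the B\'ezout deduction you propose in its place is invalid and would have to be replaced by an explicit computation (e.g.~a resultant in the reduced coordinates) or a correct analysis of the reducible quartic.
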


\paragraph{Extraction algorithm}

From the above analysis, we deduce the following extraction algorithm \lb{for cones} through a set of two oriented points $\Pc$.
\begin{enumerate}
	\item Perform a linear change of coordinates \lb{so that} $p_1=(0,0,0)$ and $n_1=(0,0,1)$. 
	\item If $n_1$ and $n_2$ are proportional then check \lb{whether} $p_2$ belongs to $\Nc_1$ and return that either there is no cone through $\Pc$, or there are infinitely many cones through $\Pc$.
	\item If $\Lc_1$ \lb{intersects} $\Lc_2$, then if $\|qp_1\|\neq \|qp_2\|$ return the two cones through $\Pc$, otherwise if $\|qp_1\|= \|qp_2\|$ return that there are infinitely many cones through $\Pc$. 
	\item Now, since $n_1$ is not proportional to $n_2$ and $\Lc_1\cup\Lc_2=\emptyset$, there are exactly two cones through $\Pc$ that are computed by means of the closed formulas \eqref{eq:conemixedsol}. 
\end{enumerate} 

This algorithm has been implemented with the {\tt Maple} software. We observed that \com{computing} the \com{cones} through a random set of points takes in average 3.8ms \com{(including all the steps in the above algorithm)} and is almost constant, \com{i.e.~independent} of the point set. Some illustrative configurations are shown in Figure \ref{fig:cone2N-1} and Figure \ref{fig:cone2N-2}.

 \begin{figure}
 	\centering
	\includegraphics[width=50mm]{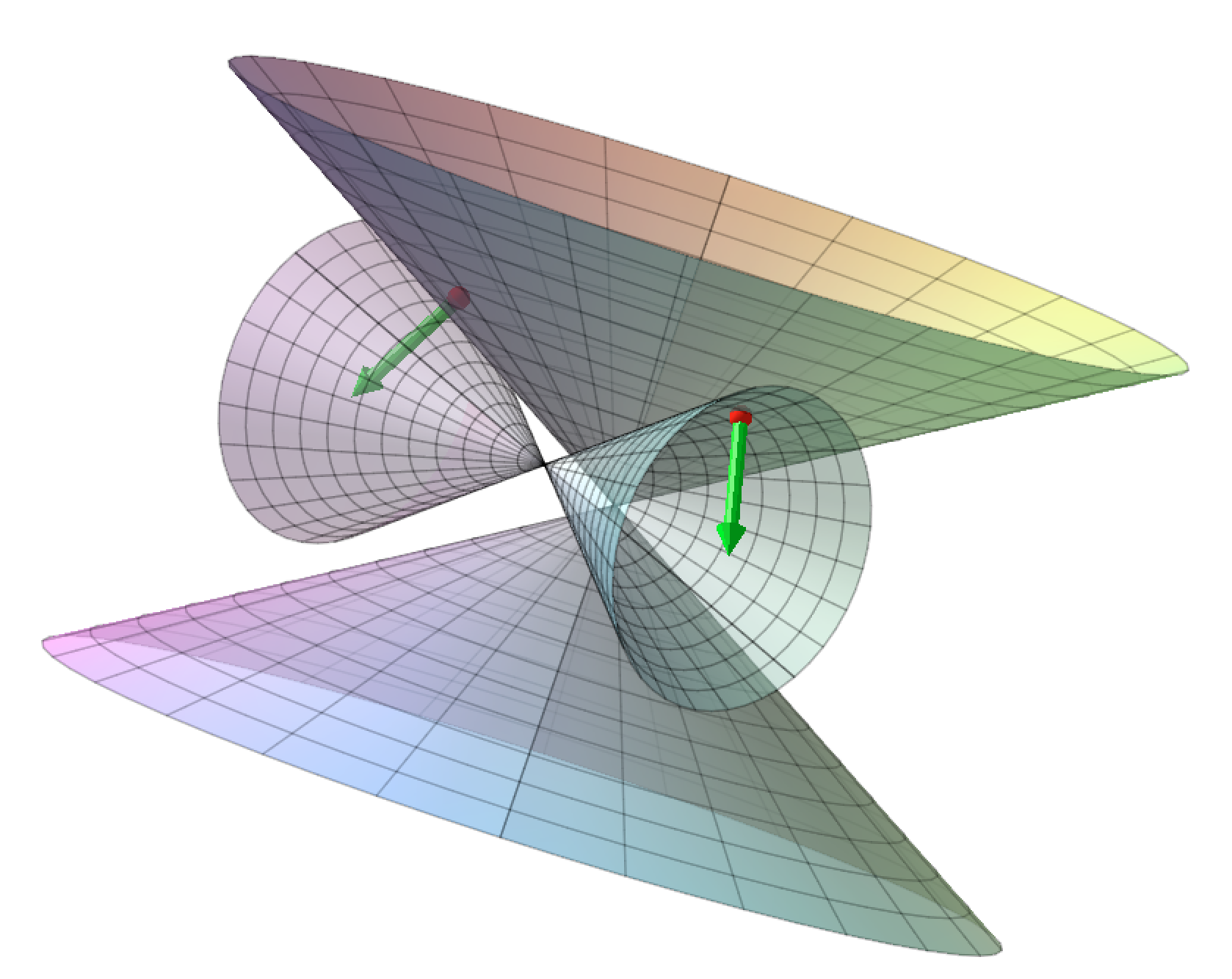}
    \caption{Two cones through a general set of two oriented points.}\label{fig:cone2N-1}
 \end{figure}

  \begin{figure}
 	 \centering
	 	 	     \includegraphics[width=60mm]{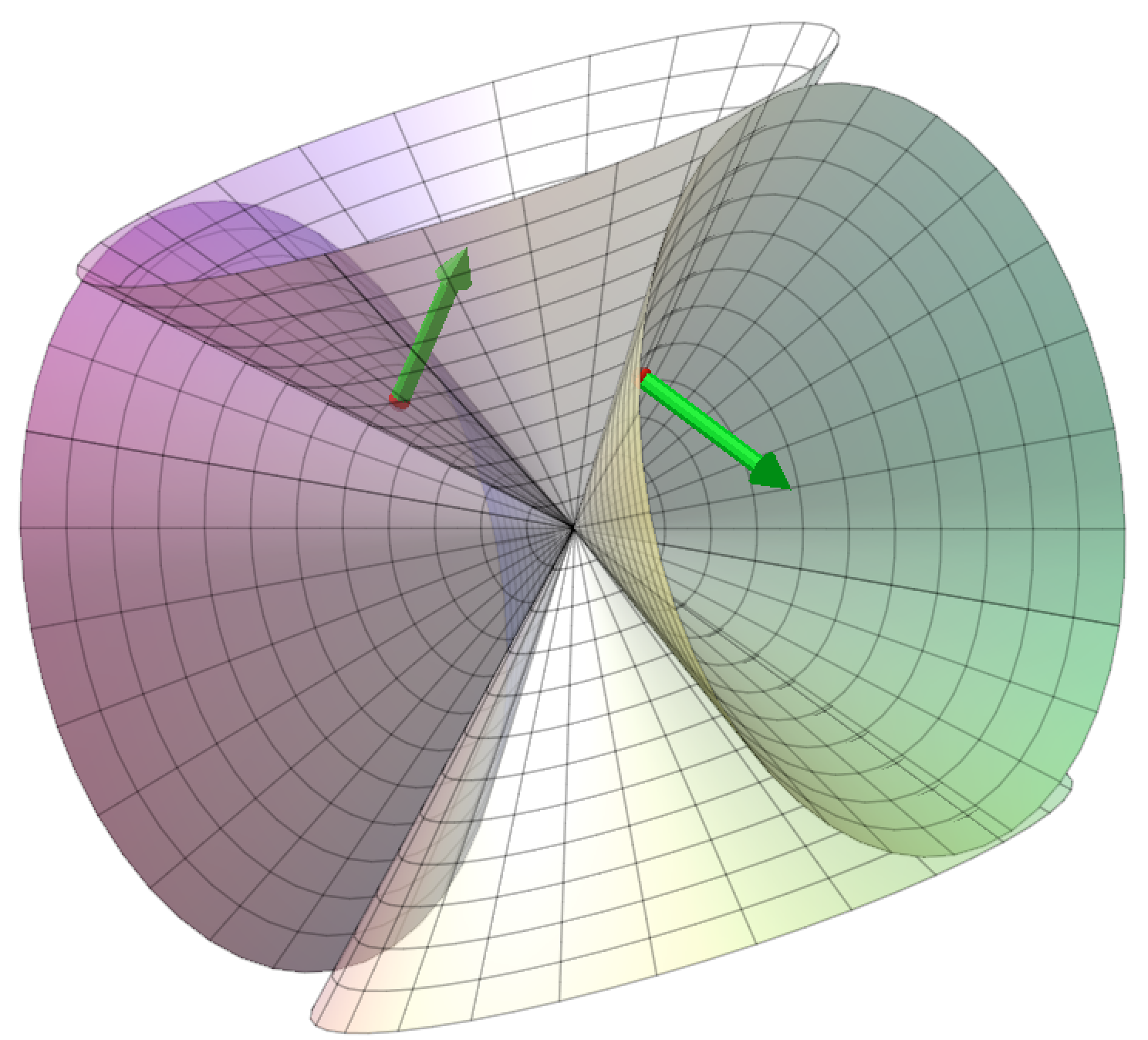}
    \caption{Two cones intersecting along two lines and sharing the same apex.}\label{fig:cone2N-2}
 \end{figure}

\subsection{Cones through a mixed minimal point set}
 \com{We seek} cones through a set $\Pc$ \com{that} is composed of an oriented point $(p_1,n_1)$ and three other simple points $p_2,p_3,p_4$. By a linear change of coordinates, one can assume without loss of generality that $p_1:=(0,0,0)$ and  $n_1:=(0,0,1)$. For the remaining data, we set $p_2=(x_2,y_2,z_2)$ , $p_3=(x_3,y_3,z_3)$ and $p_4=(x_4,y_4,z_4)$.
 
\medskip

The axis of a cone through $\Pc$ must intersect the line generated by $n_1$ through $p_1$ in a point $q=(0,0,r)$. Observe that $r$ must be nonzero otherwise $p_1$ would be the apex of such a cone, which is impossible since it is a nonsingular point. In addition, the apex of a cone through $\Pc$ must belong to the normal plane to $n_1$ through the point $p_1$ and we choose to represent the coordinates of this point by $p=(ar,br,0)$. Again, observe for the same reason that $a$ and $b$ must be nonzero, i.e.~$a^2+b^2\neq 0$.

\com{Now} a point $m=(x,y,z)$ belonging to a cone through $\Pc$ must satisfy
$$ 
(\vec{pm}\cdot\vec{pq})^2=\| pm\|^2 \| pq\|^2  \kappa,
$$
that is to say
\begin{multline*}
r^2\left(-a(x-ar) -b(y-br) -z\right)^2 = \\ r^2\left( (x-ar)^2+ (y-br)^2 +z^2 \right) 
\left( a^2 + b^2 + 1 \right)\kappa
\end{multline*}
where $\kappa>0$ is a constant that corresponds to the angle of the cone.  The factor $r^2$ can be cancelled out. Moreover, the cone goes through $p_1$ so we deduce that
$$ r^2\left(a^2+b^2 \right)^2 = r^2\left( a^2+ b^2  \right)
 \left( a^2 + b^2 + 1 \right)\kappa.
$$
\com{Substituting into} the previous equation, we finally deduce that the point $m=(x,y,z)$ belonging to a cone through $\Pc$ must satisfy $F(x,y,z)=0$ where
\begin{multline*}
	F(x,y,z)  =2\left( {a}^{2}+{b}^{2} \right) zr+ \left( {a}^{2}+{b}^{2}-1
 \right) {z}^{2} \\ -2\left( ax+by \right) z+ \left( bx-ay \right) ^{2
}.
\end{multline*}
The three parameters $a,b$ and $r$ are to be determined and we will use the three points $p_2,p_3$ and $p_4$ for that purpose. 

Assume that \com{one of the points} $p_2,p_3,p_4$, say $p_2$, is not in \com{the plane $z=0$} (the case where all the points are in this plane \com{is a special case of the situation} discussed \com{at the end of Section \ref{sec:C6P}}). Hence $z_2\neq 0$. \com{Then} there exists a cone through $\Pc$ if and only if 
$$ F(x_2,y_2,z_2) = 0, \ \ z_3F(x_2,y_2,z_2)-z_2F(x_3,y_3,z_3)=0,$$ $$z_4F(x_2,y_2,z_2)-z_2F(x_4,y_4,z_4)=0.$$  
It turns out that the two last equations are independent of $r$; for $i=3,4$ we have 
\begin{multline*}
z_iF(x_2,y_2,z_2)-z_2F(x_i,y_i,z_i) = \\
\left( {{ y_2}}^{2}{ z_i}-{{ y_i}}^{2}{ z_2}+{{ z_2}}^{2}{
 z_i}-{ z_2}{{ z_i}}^{2} \right) {a}^{2} 
 - \left( 2{ x_2}
{ y_2}{ z_i}-2{ x_i}{ y_i}{ z_2} \right) ab
\\ - \left( 2{ x_2}{ z_2}{ z_i}-2{ x_i}{ z_2}{ 
z_i} \right) a +
\left( {{ x_2}}^{2}{ z_i}-{{ x_i}}^{2}{ z_2}+{{
 z_2}}^{2}{ z_i}-{ z_2}{{ z_i}}^{2} \right) {b}^{2} \\
 - \left( 
2{ y_2}{ z_2}{ z_i}-2{ y_i}{ z_2}{ z_i}
 \right) b-{{ z_2}}^{2}{ z_i}+{ z_2}{{ z_i}}^{2}.	
\end{multline*}
Since these two equations are quadratic in $a$ and $b$, they have four common roots. Once $a$ and $b$ are determined then $r$ is uniquely defined by the equation  $F(x_2,y_2,z_2) = 0$ which is linear in $r$. So, we have proved the following result. 

\begin{thm} \com{Given} a general mixed point set $\Pc$, \com{there are} 0, 2 or 4 cones through $\Pc$. 
\end{thm}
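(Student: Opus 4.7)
The plan is to interpret the two equations obtained in the excerpt,
\begin{equation*}
E_i(a,b) := z_i F(x_2,y_2,z_2) - z_2 F(x_i,y_i,z_i) = 0, \quad i = 3,4,
\end{equation*}
as defining two real affine conics $C_3, C_4$ in the $(a,b)$-plane, and then to count their common real points via B\'ezout's theorem combined with a parity argument. Once $(a,b)$ is known, the remaining parameter $r$ is uniquely determined by the linear-in-$r$ equation $F(x_2,y_2,z_2)=0$, so the problem truly reduces to intersecting two conics.

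First I would argue that, for a general $\Pc$, the two conics $C_3$ and $C_4$ share no common irreducible component, no intersection point escapes to infinity in $\PP^2_\CC$, and no intersection has multiplicity greater than one. Each of these is a Zariski-open condition on the coordinates of $p_2,p_3,p_4$, expressible as the non-vanishing of an explicit polynomial in the input data. Under these assumptions, B\'ezout's theorem yields exactly $2\cdot 2 = 4$ distinct complex solutions in the affine plane.

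Second, because the polynomials $E_3, E_4$ have real coefficients, their non-real common zeros come in complex conjugate pairs. Writing $k$ for the number of real common zeros and $2m$ for the number of non-real ones, one has $k+2m=4$, which forces $k\in\{0,2,4\}$.

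Finally, each real solution $(a,b)$ with $a^2+b^2\neq 0$ (a further generic condition, since $a=b=0$ would put the apex on the $z$-axis) determines a unique $r\neq 0$ from $F(x_2,y_2,z_2)=0$, and the angle parameter $\kappa=(a^2+b^2)/(a^2+b^2+1)$ lies in $(0,1)$, giving a bona fide cone. The main obstacle I anticipate is precisely the genericity bookkeeping: one must rule out $C_3$ and $C_4$ sharing a line, intersections at infinity, coincident roots, as well as the degenerate outcomes $a^2+b^2=0$ or $r=0$. Each can be handled by exhibiting a non-vanishing polynomial in the coordinates of $\Pc$, but assembling these exclusion conditions is the bulk of the technical work, and without them the bound of $4$ real solutions (rather than a larger number coming from a positive-dimensional component) is not guaranteed.
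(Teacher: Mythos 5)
Your proposal is correct and follows essentially the same route as the paper: reduce to the two $r$-free quadratics $z_iF(p_2)-z_2F(p_i)=0$ in $(a,b)$, apply B\'ezout to get four complex intersections, invoke conjugation parity for the count $0$, $2$ or $4$, and recover $r$ from the linear equation $F(p_2)=0$. You are in fact more explicit than the paper's one-sentence proof about the parity argument and the genericity conditions needed to exclude common components, intersections at infinity, and the degeneracies $a^2+b^2=0$ or $r=0$.
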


\paragraph{Extraction algorithm}

From the above analysis, the extraction algorithm \com{for cones} through a mixed point set $\Pc$ composed of an oriented point and three other distinct points relies \com{on solving} two algebraic equations in two variables. To solve such systems, we proceed as in Section \ref{subsec:cyl5P} by means of eigen-computations from a pencil of matrices which is in closed form in terms of the input data. This pencil $A,B$ of $8\times 8$ matrices is obtained as the companion matrices of both equations $z_iF(x_2,y_2,z_2)-z_2F(x_i,y_i,z_i)=0$, $i=3,4$.

\begin{enumerate}
	\item Perform a linear change of coordinates \com{so that} $p_1=(0,0,0)$ and $n_1=(0,0,1)$. 
	\item Instantiate the pencil of matrices $A,B$ with the coordinates of the input points.
	\item Compute the eigenvalues and eigenvectors and deduce the real cones through $\Pc$.
\end{enumerate} 

This algorithm has been implemented with the {\tt Maple} software. We observed that \com{computing} the \com{cones} through a random set of points takes in average 7.5ms \com{(including all the steps in the above algorithm)} and is almost constant,  \com{i.e.~independent} of the point set. The proportion of the number of \com{cones} found through random point sets is given in Table \ref{tab:cone1N3P} and some illustrative configurations are shown in Figure \ref{fig:cone1N3P}.

\begin{figure}[ht!]
\centering	
   \includegraphics[width=65mm]{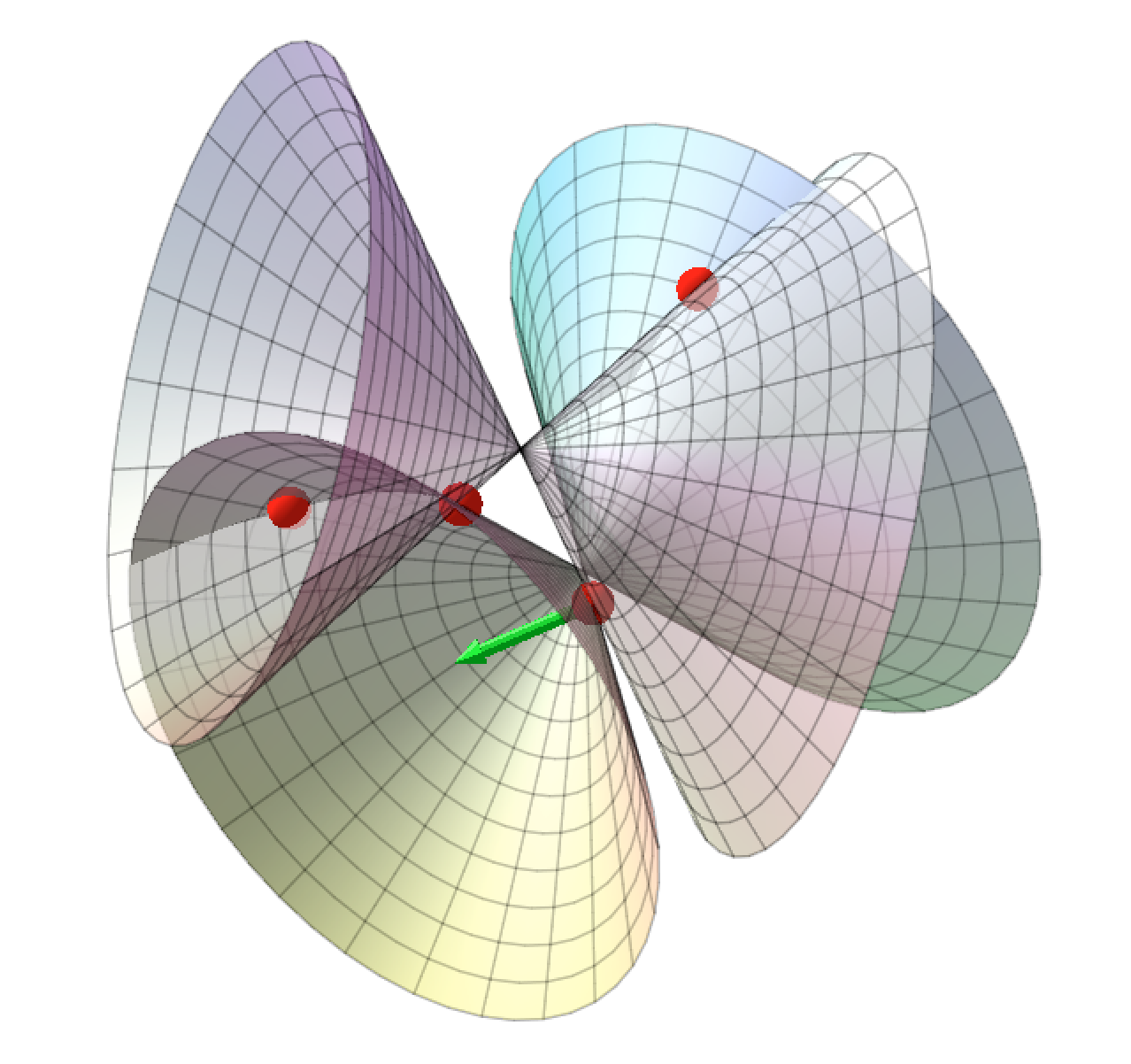}
   \hspace{4em}
   \includegraphics[width=65mm]{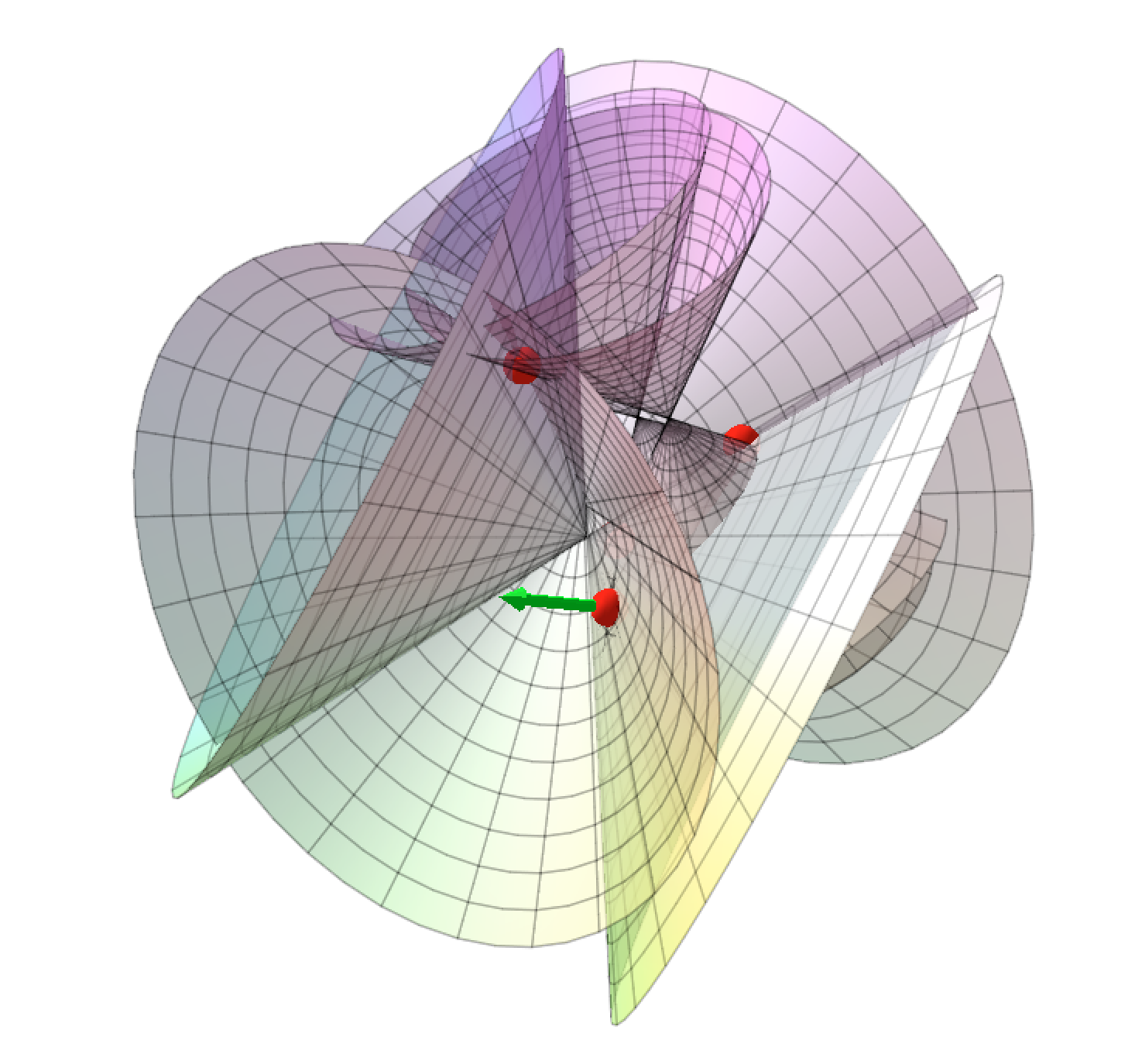}
   \caption{Two configurations of $\Pc$ with 2 and 4 cones passing through $\Pc$.}\label{fig:cone1N3P}
\end{figure}  

\begin{table}[ht!]
\begin{center}
\begin{tabular}{c|c|c|c|}
 Number of cones & 0 & 2 & 4 \\ 
	 \hline
Proportion (\%)	 &  6.9 & 85.9 & 7.2 \\
\hline	
\end{tabular}
\caption{Proportion of the number of \com{cones} found through a thousand random point sets.}\label{tab:cone1N3P}
\end{center}
\end{table}

\subsection{Cones through six points}\label{sec:C6P}
\com{We seek} cones through a set $\Pc$ composed of six simple points $p_1,p_2,p_3,p_4,p_5,p_6$. By a linear change of coordinates, we can assume without loss of generality that
\begin{align}\label{eq:hyp-p1p2p3p4p5p6}
& p_1=(0,0,0), \ p_2=(x_2,0,0), \ p_3=(x_3,y_3,0), \\ \nonumber
& p_4=(x_4,y_4,z_4), \ p_5=(x_5,y_5,z_5), \ p_6=(x_6,y_6,z_6).
\end{align} 

\paragraph{Geometric analysis}
The general equation of a cone depends on 6 parameters:
\begin{multline}\label{eq:eqcone}
F(x,y,z):= \\ (x-a)^2+(y-b)^2+(z-c)^2-(l(x-a)+m(y-b)+n(z-c))^2	
\end{multline}
where $(a,b,c)$ are the coordinates of the apex and the vector $(l,m,n)$ encodes the direction of the axis as well as the angle of the cone ($l^2+m^2+n^2=1/\cos^2(\theta)$, $\theta$ being the angle of the cone).

We introduce the quantity $k:=al+bm+cn$. Since $p_1$ is the origin, we get the equation
\begin{equation}\label{eq:F0}
F(p_1)=a^2+b^2+c^2-k^2=0.	
\end{equation}
Moreover, we observe that the polynomial
\begin{multline}\label{eq:F-F1}
F(x,y,z)-F(p_1)=
 2\left( lx+my+nz \right) k -2(ax+by+cz) \\
 -2lmxy-2lnxz-2mnyz
 -{l}^{2}{x}^{2}-{m}^{2}{y}^{2}-{n}^{2}{z}^{2}+{x}^{2}+{y}^{2}+{z}^{2}	
\end{multline}
is linear in $a,b,c$ and $k$. From the equation
$$F(p_2)-F(p_1)=
2l{ x_2}k-2{ x_2}a-{l}^{2}{{ x_2}}^{2}+{{ x_2}}^{2}=0$$
we get, assuming $x_2\neq 0$, i.e.~$p_2\neq p_1$,
\begin{equation}\label{eq:a}
a=lk+ \frac{x_2}{2}(1-l^2).	
\end{equation}
Now we have that
\begin{multline*}
F(p_3)-F(p_1)= 
2m{y_3}k-2{y_3}b 
+{x_3}{l}^{2}{x_2} \\ -{l}^{2}{{x_3}}^{2}-2lm{x_3}{y_3}
-{m}^{2}{{y_3}}^{2}-{x_3}{x_2}
+{{x_3}}^{2}+{{y_3}}^{2} =0	
\end{multline*}
and we get, assuming $y_3\neq 0$, i.e.~assuming that $p_1,p_2$ and $p_3$ are not aligned,
\begin{multline}\label{eq:b}
b=
mk+ \frac{1}{2y_3}
\left(
{x_3}{l}^{2}{x_2}-{l}^{2}{{x_3}}^{2}-2lm{x_3}{y_3} \right. \\
\left. -{m}^{2}{{y_3}}^{2}-{x_3}{x_2}+{{x_3}}^{2}+
{{y_3}}^{2}
\right).	
\end{multline}
Continuing this way, we get the equation
\begin{multline*}
F(p_4)-F(p_1)= 2 { z_4}nk -2 { z_4} c  \\
+ \frac{1}{y_3}(
-{l}^{2}{ x_2} { x_3} { y_4}+{l}^{2}{ x_2} { x_4} { 
y_3}+{l}^{2}{{ x_3}}^{2}{ y_4}-{l}^{2}{{ x_4}}^{2}{ y_3} \\ 
+2 lm{
 x_3} { y_3} { y_4}-2 lm{ x_4} { y_4} { y_3}-2 ln{
 x_4} { z_4} { y_3} +{m}^{2}{{ y_3}}^{2}{ y_4} \\
 -{m}^{2}{{
 y_4}}^{2}{ y_3}-2 mn{ y_4} { z_4} { y_3}-{n}^{2}{{ z_4
}}^{2}{ y_3} 
+{ x_2}
 { x_3} { y_4}-{ x_2} { x_4} { y_3} \\
 -{{ x_3}}^{2}{ 
y_4}+{{ x_4}}^{2}{ y_3}-{{ y_3}}^{2}{ y_4}+{{ y_4}}^{2}{ 
y_3}+{{ z_4}}^{2}{ y_3})
=0	
\end{multline*}
\com{and we get}, assuming that $z_4\neq 0$, i.e.~assuming that $p_1,p_2,p_3$ and $p_4$ are not coplanar,
\begin{multline}\label{eq:c}
c= nk +
\frac{1}{2y_3z_4} ( 
-{l}^{2}{  x_2} {  x_3} {  y_4}+{l}^{2}{  x_2} {  x_4} {  
y_3}+{l}^{2}{{  x_3}}^{2}{  y_4}-{l}^{2}{{  x_4}}^{2}{  y_3} \\
+2 lm{
  x_3} {  y_3} {  y_4}-2 lm{  x_4} {  y_4} {  y_3}-2 ln{
  x_4} {  z_4} {  y_3} 
  +{m}^{2}{{  y_3}}^{2}{  y_4}-{m}^{2}{{
  y_4}}^{2}{  y_3} \\ -2 mn{  y_4} {  z_4} {  y_3}-{n}^{2}{{  z_4
}}^{2}{  y_3}+{  x_2} {  x_3} {  y_4}-{
  x_2} {  x_4} {  y_3}-{{  x_3}}^{2}{  y_4}+{{  x_4}}^{2}{  
y_3} \\ -{{  y_3}}^{2}{  y_4}+{{  y_4}}^{2}{  y_3}+{{  z_4}}^{2}{  
y_3}
).
\end{multline}
From here, we substitute $a,b$ and $c$ by the above quantity that we found in both equations $F(p_i)-F(p_1)=0$ with $i=5,6$. \com{Notice} that these two equations do not depend on $k$ because of the cancellation of the two terms $ 2\left( lx+my+nz \right) k -2(ax+by+cz)$ in \eqref{eq:F-F1} by this substitution. We get two equations $H_i(l,m,n)$, $i=5,6$, that satisfy
\begin{multline*}
y_3z_4H_i(l,m,n)= \\
	 \left( { x_2} { x_3} { y_4} { z_i}-{ x_2} { x_3} {
	 y_i} { z_4}-{ x_2} { x_4} { y_3} { z_i}+{ x_2} {
	 x_i} { y_3} { z_4} \right. \\ \left. -{{ x_3}}^{2}{ y_4} { z_i}+{{ x_3}
	}^{2}{ y_i} { z_4}+{{ x_4}}^{2}{ y_3} { z_i}-{{ x_i}}^{2
	}{ y_3} { z_4} \right) {l}^{2} \\
	+ \left( -2 { x_3} { y_3} {
	 y_4} { z_i}+2 { x_3} { y_3} { y_i} { z_4}+2 { x_4
	} { y_3} { y_4} { z_i}-2 { x_i} { y_3} { y_i} { 
	z_4} \right) lm \\
	+ \left( 2 { x_4} { y_3} { z_4} { z_i}-2 {
	 x_i} { y_3} { z_4} { z_i} \right) ln \\
	 + \left( -{{ y_3}}^{
	2}{ y_4} { z_i}+{{ y_3}}^{2}{ y_i} { z_4}+{ y_3} {{ 
	y_4}}^{2}{ z_i}-{{ y_i}}^{2}{ y_3} { z_4} \right) {m}^{2} \\
	+
	 \left( 2 { y_3} { y_4} { z_4} { z_i}-2 { y_3} { y_i
	} { z_4} { z_i} \right) mn 
	+ \left( { y_3} {{ z_4}}^{2}{ 
	z_i}-{{ z_i}}^{2}{ y_3} { z_4} \right) {n}^{2}\\ 
	-{ x_2} { x_3
	} { y_4} { z_i}+{ x_2} { x_3} { y_i} { z_4}+{ x_2}
	 { x_4} { y_3} { z_i}-{ x_2} { x_i} { y_3} { z_4} \\ +
	{{ x_3}}^{2}{ y_4} { z_i}-{{ x_3}}^{2}{ y_i} { z_4}-{{
	 x_4}}^{2}{ y_3} { z_i}+{{ x_i}}^{2}{ y_3} { z_4}+{{ 
	y_3}}^{2}{ y_4} { z_i}\\
	-{{ y_3}}^{2}{ y_i} { z_4}-{ y_3} 
	{{ y_4}}^{2}{ z_i}+{{ y_i}}^{2}{ y_3} { z_4}-{ y_3} {{
	 z_4}}^{2}{ z_i}+{{ z_i}}^{2}{ y_3} { z_4}=0.
\end{multline*}
These equations $H_5$ and $H_6$ are polynomials of degree 2 in $l,m,n$ and only the constant term and monomials of degree 2 have nonzero coefficients. The equation $H_5$ depends on the coefficients of $p_2,p_3,p_4$ and $p_5$ whereas $H_6$ depends on the coefficients of $p_2,p_3,p_4$ and $p_6$.   \com{Applying} the same substitutions and replacing $k$ by its defining value in \eqref{eq:F0}, we get another equation $H_0(l,m,n)$ which is of degree 6 in $l,m,n$ and such that only monomials of even degree have nonzero coefficients. Observe that $H_0$ \com{depends only on} the coordinates of $p_2,p_3$ and $p_4$. To give an idea, we have
\begin{multline*}
4y_3^2z_4^2(l^2+m^2+n^2-1)H_0= \\
 \left( -{{ x_2}}^{2}{{ x_3}}^{2}{{ y_4}}^{2}-{{ x_2}}^{2}{{
 x_3}}^{2}{{ z_4}}^{2}+2 {{ x_2}}^{2}{ x_3} { x_4} { 
y_3} { y_4}-{{ x_2}}^{2}{{ x_4}}^{2}{{ y_3}}^{2} \right. \\ \left. +2 { x_2} 
{{ x_3}}^{3}{{ y_4}}^{2} +2 { x_2} {{ x_3}}^{3}{{ z_4}}^{2} 
-2 { x_2} {{ x_3}}^{2}{ x_4} { y_3} { y_4} \right. \\
\left. 
-2 { x_2} 
{ x_3} {{ x_4}}^{2}{ y_3} { y_4}+2 { x_2} {{ x_4}}^{3}
{{ y_3}}^{2}-{{ x_3}}^{4}{{ y_4}}^{2}-{{ x_3}}^{4}{{ z_4}}^{
2}  \right. \\ \left. +2 {{ x_3}}^{2}{{ x_4}}^{2}{ y_3} { y_4}-{{ x_4}}^{4}{{
 y_3}}^{2} \right) {l}^{6} +\cdots 
\end{multline*}
We notice that the quantity $l^2+m^2+n^2-1$ can be assumed to be nonzero because cones such that $l^2+m^2+n^2=1$ \com{degenerate} to straight lines.

The common roots of the equations $H_0,H_5$ and $H_6$ \com{yield} exactly all the cones through $\Pc$. If finite, this number is at most  $2\times 2\times 6=24$ because of the B\'ezout theorem. However, each cone through $\Pc$ yields two solutions of this polynomial system $H_0=0,H_5=0,H_6=0$, namely $(l_0,m_0,n_0)$ and $(-l_0,-m_0,-n_0)$. This latter property is a consequence of the fact that $H_0,H_5,H_6$ \com{can be written as} monomials of even total degree in $l,m,n$. Therefore, we deduce that \com{there are} at most 12 cones through $\Pc$. Nevertheless, in order to devise an efficient algorithm for computing those cones through $\Pc$, we apply a transformation (called a \emph{$\sigma$-process} or a \emph{blow-up} in birational geometry) in order to break the above central symmetry and reduce the degree of the equations so that each cone will correspond to a unique root of the new polynomial system. 

Consider the transformation
\begin{eqnarray*}
	\sigma : \RR^3 & \rightarrow & \RR^3 \\
	(l,m,n) & \mapsto & \left(u:=\frac{l}{n},v:=\frac{m}{n},n\right).
\end{eqnarray*}
Since the equations $H_0, H_5$ and $H_6$ \com{contain only} even degree monomials, their transform under $\sigma$ will be a polynomial in $n^2$ (and not only a polynomial in $n$). More precisely, setting $w:=n^2$, we have
\begin{align*}
	H_0(un,vn,n) & =  w^3P(u,v)+w^2Q(u,v)+wR(u,v)+S, \\
	H_5(un,vn,n) & =  wH(u,v)+C,\\ 
	H_6(un,vn,n) & =  wK(u,v)+D,
\end{align*}
where $H,K,P,Q,R$ are polynomials in $u,v$ of degree 2, 2, 6, 4, 2 respectively, and where $C,D,S$ are \com{constants} that \com{depend only} on the $x_i$'s, $y_i$'s and $z_i$'s. \com{Since} the \com{last two} equations are linear in $w$ we can simplify this polynomial system \com{in $w$}, assuming $w\neq 0$, to get 
\begin{align}\label{eq:cone6Psys}
	E_0(u,v) & =  C^3P(u,v)-C^2Q(u,v)H(u,v)\\ \nonumber
	 & \hspace{5em}+CR(u,v)H(u,v)^2-SH(u,v)^3, \\ \nonumber
	E_1(u,v) & =  DH(u,v)-CK(u,v),\\ \nonumber
	E_2(u,v,w) & =  wK(u,v)+D.
\end{align}
The two equations $E_0$ and $E_1$ are bivariate polynomials in $u,v$ of degree $6$ and $2$ respectively. Therefore, they define at most $12$ solutions. In addition, for each solution the value of $w$ can be computed from $E_2$. As a consequence, we have just proved the following result.

\begin{thm}\label{thm:12cones}
	 \com{Given} a point set $\Pc$ composed of six distinct points \com{then}, if finite, there is an even number of cones through $\Pc$, which is possibly 0 and at most 12.
\end{thm}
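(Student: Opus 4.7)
The analysis just before the statement has reduced the cone interpolation problem to the bivariate polynomial system $E_0(u,v)=0,\ E_1(u,v)=0$ in $(u,v)=(l/n,m/n)$, with $w=n^2$ then recovered from the linear equation $E_2$ and the remaining cone data obtained from \eqref{eq:a}, \eqref{eq:b}, \eqref{eq:c} together with $k^2=a^2+b^2+c^2$. My plan is to finalize the proof by combining B\'ezout's theorem with a complex-conjugation argument for the parity.

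For the upper bound I would apply B\'ezout's theorem to the plane curves $E_0=0$ and $E_1=0$ of degrees $6$ and $2$. Finiteness of the solution set rules out a common component, so these two curves meet in at most $6\cdot 2=12$ complex points of the affine plane, counted with multiplicity. Since the $\sigma$-process $(u,v,w)=(l/n,m/n,n^2)$ was designed precisely to quotient out the central involution $(l,m,n)\mapsto(-l,-m,-n)$ acting on the original system $H_0=H_5=H_6=0$, each such $(u,v)$, together with its uniquely determined value $w=-D/K(u,v)$ coming from $E_2$, gives rise to at most one cone through $\Pc$; this yields the bound of $12$.

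For parity, since the points of $\Pc$ have real coordinates the polynomials $E_0$ and $E_1$ lie in $\RR[u,v]$, and componentwise complex conjugation is an involution of their complex solution set whose fixed locus is exactly the real solutions. Non-real solutions therefore form conjugate pairs, so the real affine count has the same parity as the total affine complex count. When the B\'ezout bound is attained in $\PP^2$ this total equals $12$, and any intersections at infinity of $E_0$ and $E_1$ are themselves paired by conjugation on $\PP^1_\infty$, so the parity of the affine total remains even. Hence the number of real cones interpolating $\Pc$ is even.

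The main obstacle is controlling the degenerate configurations implicitly excluded along the way: real solutions with $w\le 0$ (which yield no real cone), the elimination loci $H(u,v)K(u,v)=0$ where the reduction from $H_0,H_5,H_6$ to $E_0,E_1,E_2$ breaks down, and the flat cones with $l^2+m^2+n^2=1$ discarded at the start. I expect to handle these by a continuity argument, showing that they form a closed proper subvariety of the parameter space of six-tuples across which the even parity of the real cone count is invariant, so that it suffices to verify the even parity on one explicit generic configuration (for instance one producing zero cones) and propagate the conclusion to every $\Pc$ satisfying the finiteness hypothesis.
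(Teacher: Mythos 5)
Your core route is the paper's own: apply B\'ezout to $E_0=0$, $E_1=0$ to get at most $12$ solutions, recover $w$ uniquely from $E_2$, pull back uniquely under $\sigma$, and recover $a,b,c$; one small slip is that the linearity used for $a,b,c$ comes from substituting $k=al+bm+cn$ into \eqref{eq:a}, \eqref{eq:b}, \eqref{eq:c}, not from the quadratic relation $a^2+b^2+c^2=k^2$. The paper's proof is exactly this bijection plus the bare assertion that the real solution count of $E_0=E_1=0$ is even; you have correctly identified that this assertion, and the bijection itself, hide several issues (roots with $w\le 0$, the loci where the elimination degenerates, flat cones).

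The gap is that your last paragraph is a plan, not a proof, and the plan fails structurally on the most serious of these issues. The condition that some real root $(u,v)$ of $E_0=E_1=0$ satisfies $w=-D/K(u,v)<0$ — in which case $n=\sqrt{w}$ is imaginary and the root produces no real cone, breaking the bijection and potentially the parity — is an \emph{open} condition on the six-point configuration: if it holds once it holds on a neighbourhood, so the set of such configurations cannot lie in a closed proper subvariety, and no continuity argument propagated from one generic example can exclude it. You would instead have to show directly either that $w>0$ at every real root, or that real roots with $w<0$ occur in pairs. Two further steps are asserted too quickly: real intersection points of $E_0$ and $E_1$ at infinity are \emph{fixed} by conjugation, not paired, so they can change the parity of the affine B\'ezout count; and transferring parity from the complex count (with multiplicity) to the number of distinct real roots requires all roots to be simple, which the hypothesis of finiteness does not give. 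These caveats afflict the paper's two-line proof equally, but your proposal as written does not close them.
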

\begin{proof} \com{Solving} the polynomial system \com{given by the equations $E_0(u,v)=0$ and $E_1(u,v)=0$} leads to an even number of real solutions with a maximum of $12=6\times 2$ solutions. So the only thing to show is that there is a bijection between these solutions and the cones through $\Pc$. For that purpose, we first observe that the value of $w$ is uniquely determined from each of those roots by equation $E_2$. And then the solutions \com{$(u,v,w)$ uniquely \lb{pull back}} to a solution $(l,m,n)$ under $\sigma$. 
	
	\com{So} it remains to show that once a solution $(l,m,n)$ is computed, then the parameters $a,b$ and $c$ are uniquely determined. This is indeed the case because $a,b$ and $c$ can \com{actually be given} in closed form in terms of $l,m,n$ and the coordinates of the input \com{points because} the three equations \eqref{eq:a}, \eqref{eq:b} and \eqref{eq:c} \com{yield} a linear system in $a,b,c$ after replacing $k$ by its defining value $al+bm+cn$.
\end{proof}

In Figure \ref{fig:cones} it is shown that all the \com{possible solutions} are reached in practice. Finally, observe that the particular case $w=n^2=0$ can be treated independently in the same \com{vein} as above. Indeed, the remaining variables $l,m$ can be computed from the equations $H_5(l,m,0)=0$ and $H_6(l,m,0)=0$. Moreover, to solve efficiently this system one can perform a transformation similar to $\sigma$ by considering the equations $H_5(um,m,0)=0$ and $H_6(um,m,0)=0$ that are linear equations in $m^2$. At the end, the computed solutions \com{must satisfy} the equation $H_0$ in order to validate that they correspond to cones through the given set of points $\Pc$.   

\begin{figure}[ht!]
\centering	
   \includegraphics[width=40mm]{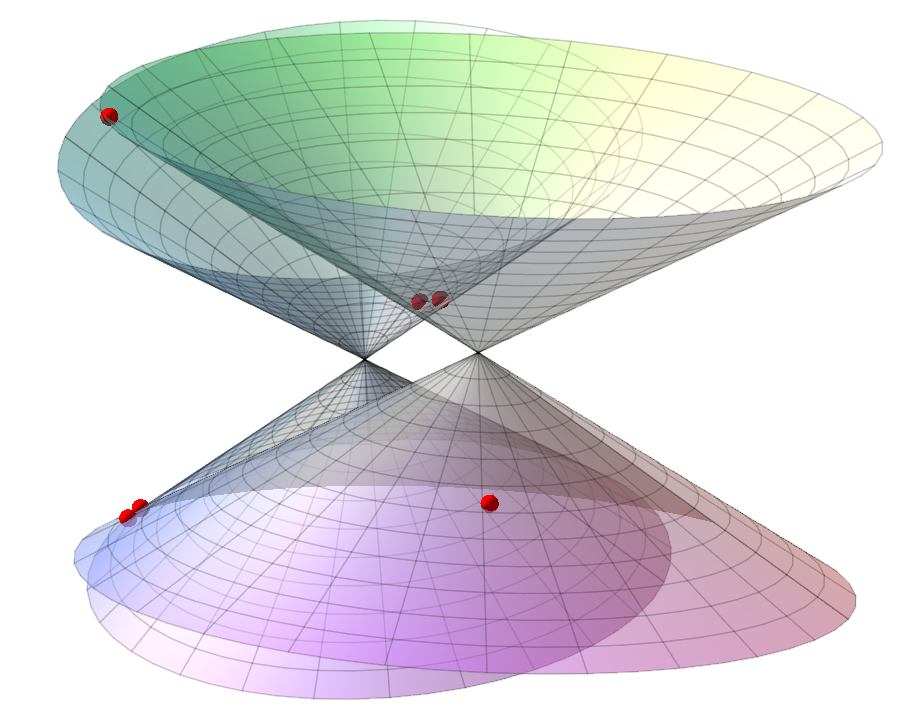}
   \includegraphics[width=40mm]{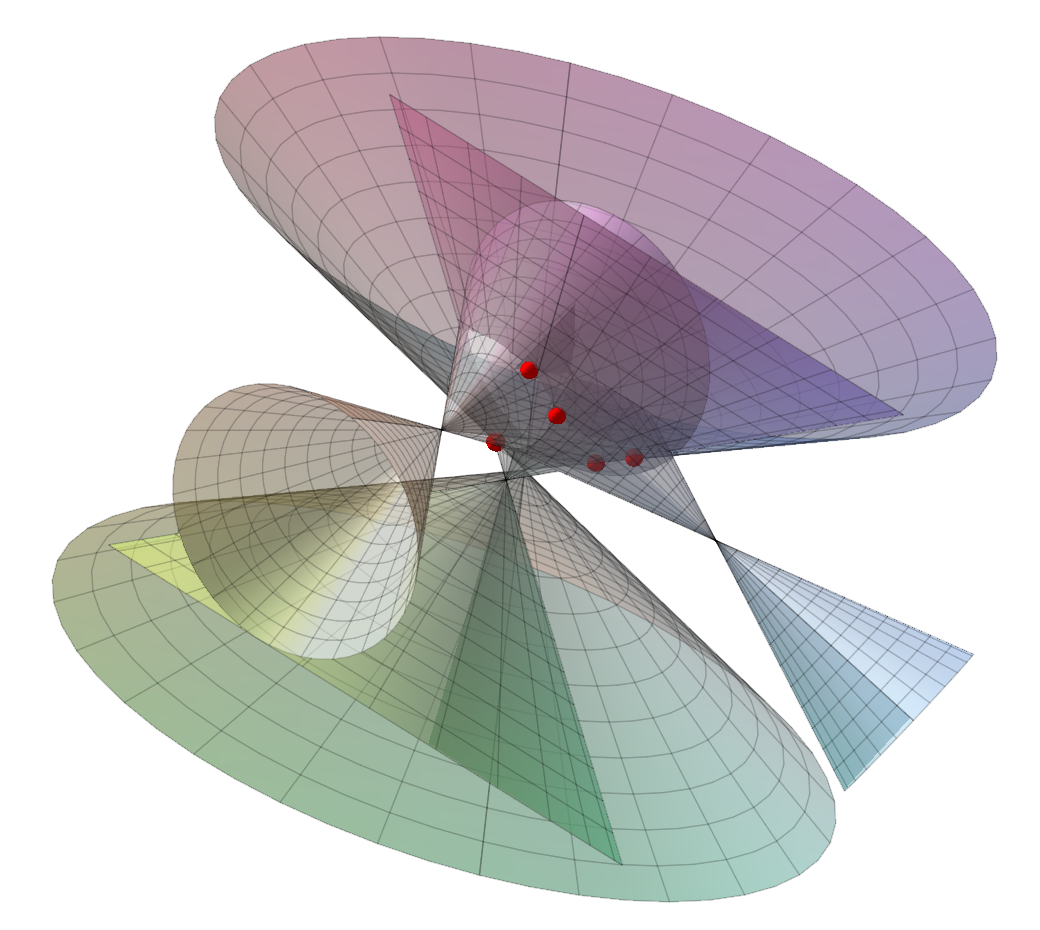}
   \includegraphics[width=40mm]{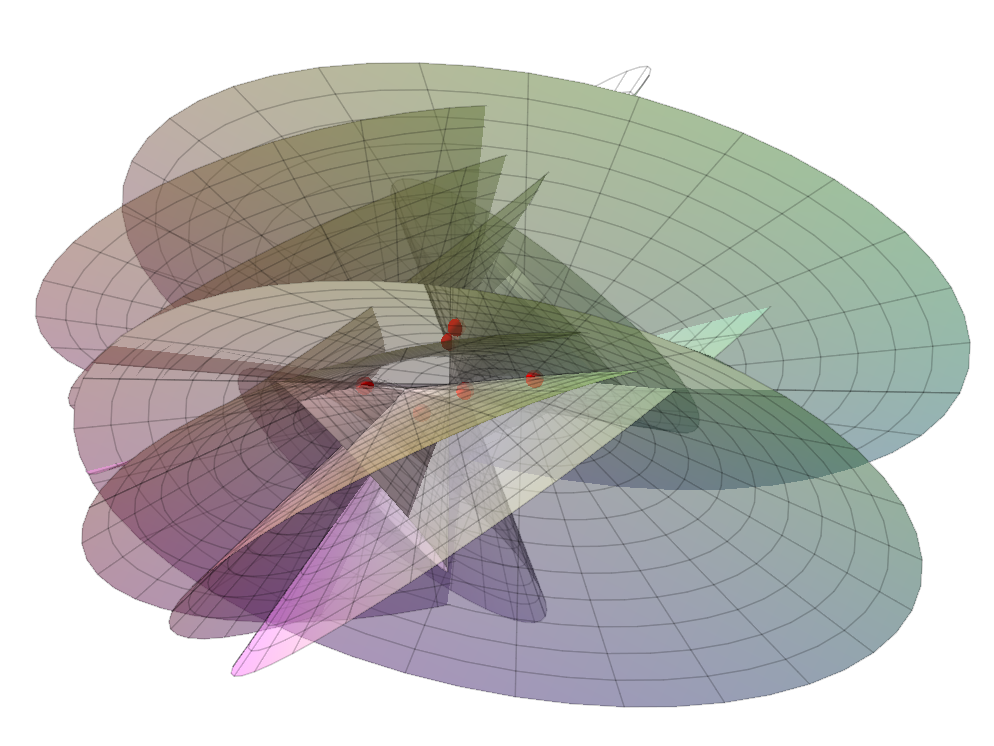}
   \includegraphics[width=40mm]{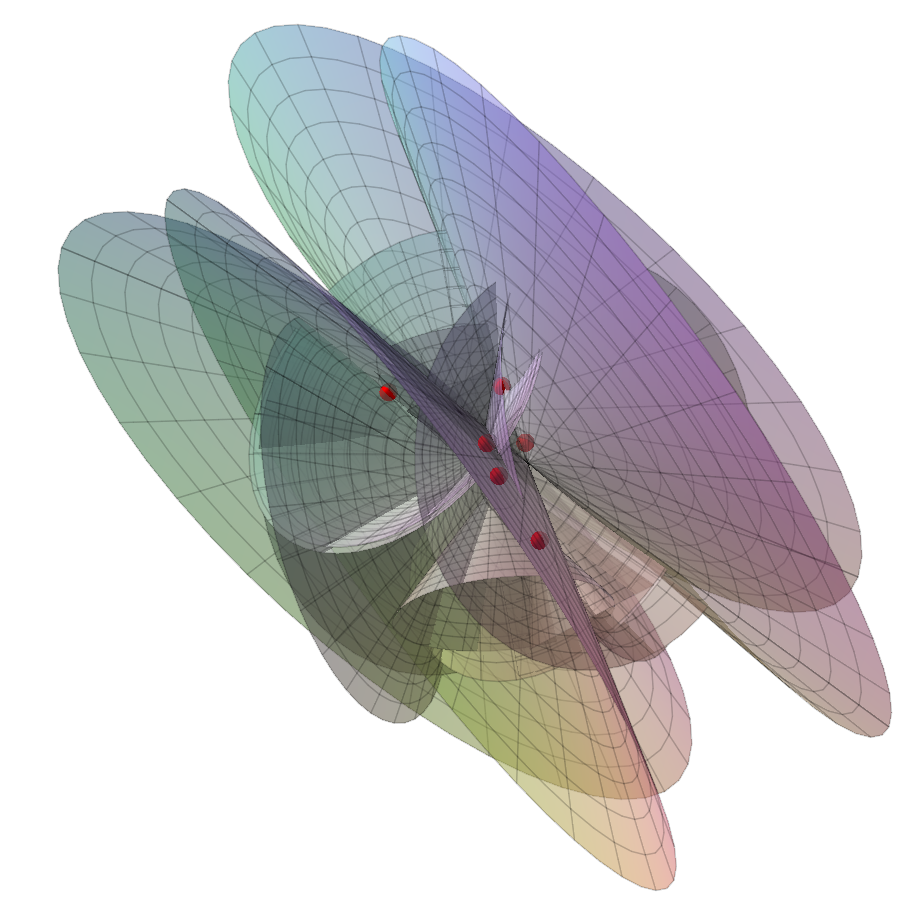}
   \includegraphics[width=40mm]{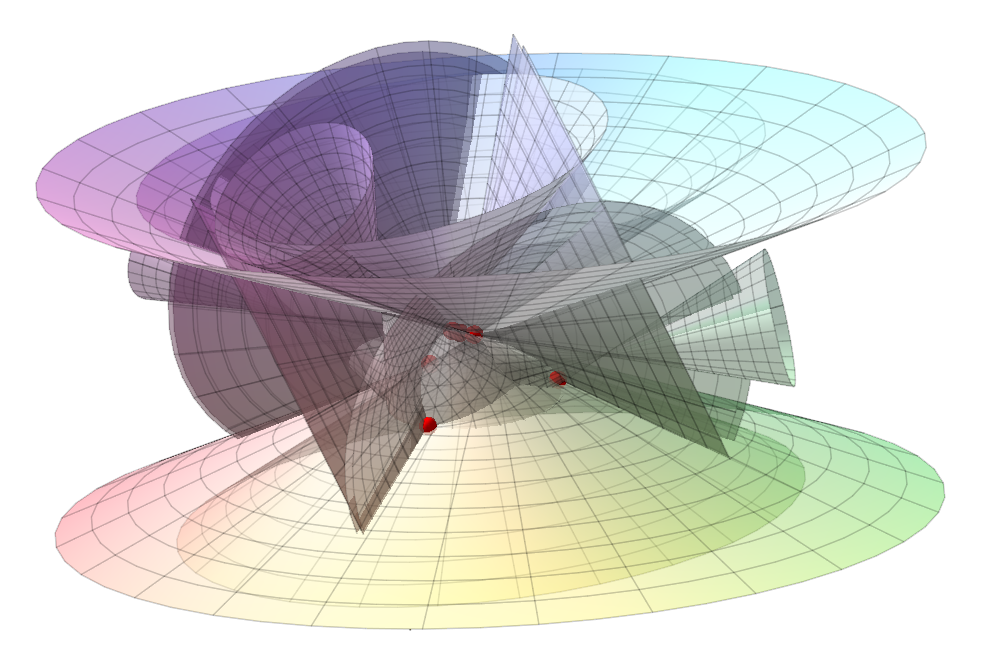}
   \includegraphics[width=40mm]{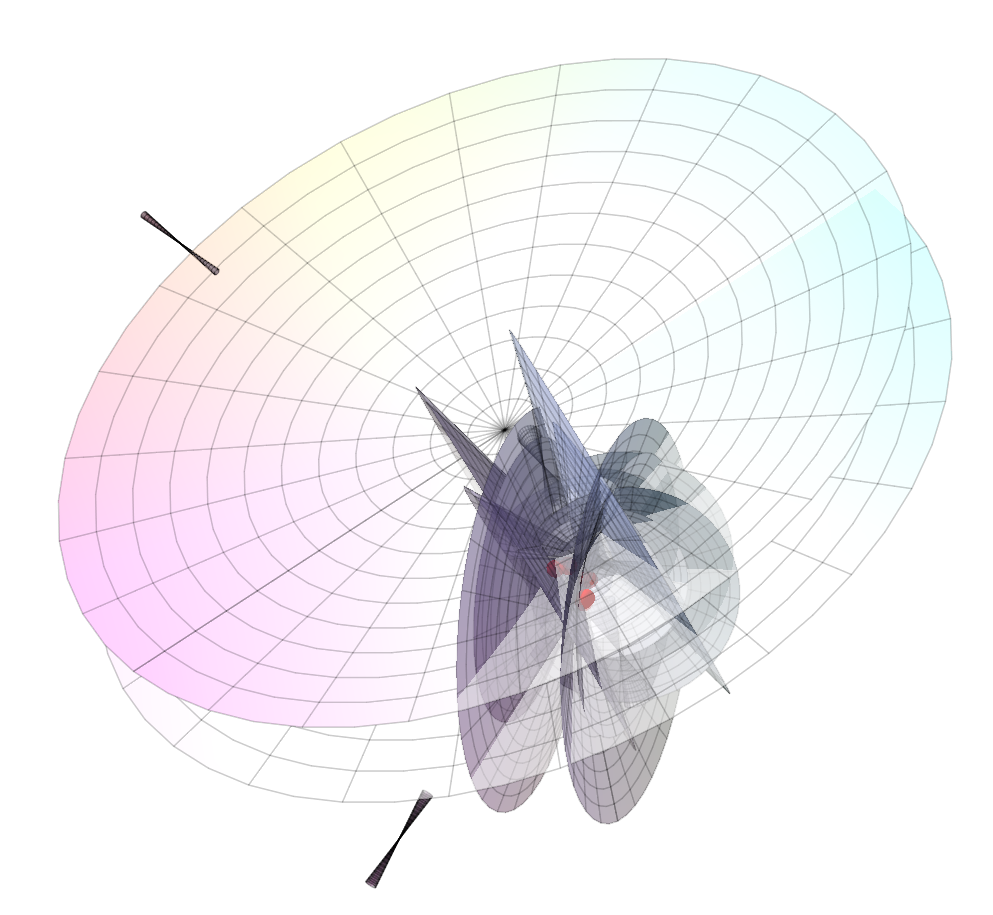}
   \caption{Configurations of points $\Pc$ with 2, 4, 6, 8, 10 and 12 cones passing through $\Pc$.}
   \label{fig:cones}
\end{figure}  

\paragraph{Extraction algorithm}

Summarizing the above analysis, \com{solving for the} cones through $\Pc$ can be done as follows. 

\begin{enumerate}
	\item Perform a linear change of coordinates so that the five points are of the form \eqref{eq:hyp-p1p2p3p4p5p6}. 
	\item If $x_2$, $y_3$ or $z_4$ is equal to zero then we are in a particular configuration (see Section \ref{subsec:cone6P-particular}) \com{so this} algorithm stops here.  
	\item Build the system of equations \eqref{eq:cone6Psys} which is given in closed form in terms of the coordinates of the input points.
	\item Solve the system $E_0(u,v)=E_1(u,v)=0$ in the two variables $u,v$ by means of eigenvectors and eigenvalues computations, as explained in Section \ref{subsec:cyl5P} (see \cite{BKM05} for details). \com{This computation} returns a list of an even number between 0 and 12 of \com{real} solutions (up to a given precision).  
	\item For each above solution $(u_i,v_i)$, compute the corresponding value of $w_i$ \com{using} the equation $E_2(u_i,v_i,w_i)=0$. \com{Then pullback} the solution $(u_i,v_i,w_i)$ under $\sigma$ to get the solutions $(l_i=u_in_i,m_i=v_in_i,n_i=\sqrt{w_i})$ of the polynomial system $H_0=H_5=H_6=0$. 
	\item For each solution $(l_i,m_i,n_i)$, compute the corresponding $(a_i,b_i,c_i)$ (as explained in the proof of Theorem \ref{thm:12cones}) and return the corresponding cone through $\Pc$.
\end{enumerate} 

This algorithm has been implemented with the {\tt Maple} software. We observed that \com{computing} the cones through a random set of points takes in average 80ms \com{(including all the steps in the above algorithm)} and is almost constant, \com{i.e.~independent} of the point set. In Table \ref{tab:cone6P}, we provide the \com{distribution} of the number of cones through a set of \com{points} $\Pc$ for a random sample of a thousand point sets $\Pc$.

\begin{table}
\begin{center}
\begin{tabular}{c|c|c|c|c|c|}
 Number of cones & 0 & 2 & 4 & 6  \\ 
	 \hline
Proportion (\%)	 &  1 & 10,5  & 28,7 & 36,3   \\ \hline \hline

 Number of cones  & 8 & 10 & 12 & \\ 
	 \hline
Proportion (\%)	  & 18,7 & 3,9 & 0,9 & \\

\hline	
\end{tabular}
\caption{Proportion of the number of cones found through a thousand random point sets.}\label{tab:cone6P}
\end{center}
\end{table}

\paragraph{Some particular configurations}\label{subsec:cone6P-particular}

If every subset of four \lb{po\-ints} in $\Pc$ is coplanar then all the six points in $\Pc$ are necessarily coplanar. In such a configuration, there exists a cone through $\Pc$ if and only if $\Pc$ can be interpolated by a conic section (parabola, hyperbola or ellipse). If this is the case, then there are infinitely many cones through $\Pc$. When a subset of five points in $\Pc$ are coplanar then we get interesting particular configurations.

\begin{thm}\label{them:4cones} \com{Given} a point set $\Pc$ where five points are coplanar but not six. \lb{There} exists a cone through $\Pc$ if these five points are located on \com{an ellipse or a hyperbola}. Moreover, if this is the \com{case there} are at most four cones through $\Pc$. 
	
	More specifically, if the five points are co-circular, then there are two cones through $\Pc$ and if \com{the five points} are on a parabola then there are at most 3 cones through $\Pc$. 
\end{thm}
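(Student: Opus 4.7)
The plan is to reduce the problem, via the classical description of plane sections of a right circular cone, to a one-parameter family of cones through a conic in the plane $\Pi$ containing the five coplanar points, and then to impose the remaining constraint $p_6\in K$.

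Since a right circular cone $K$ is the zero set of a quadratic equation, its trace $K\cap\Pi$ on $\Pi$ is a plane conic; as it must contain the five given coplanar points and five points in general position determine a unique conic, $K\cap\Pi$ must coincide with the conic $C$ through these five points. The plane sections of a right circular cone are classified as circles, ellipses, parabolas, hyperbolas, or as degenerate configurations (a single line or a pair of lines), so the existence of a cone interpolating $\Pc$ forces $C$ to be of one of the three non-degenerate types stated in the theorem, which proves the necessary condition.

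For each admissible type of $C$ I would parameterise explicitly the one-parameter family of cones $K$ with $K\cap\Pi = C$ and then impose $p_6\in K$. In the circle case the axis of $K$ is forced to be the normal to $\Pi$ through the centre of $C$, and the cone is determined by the apex height $h$; substituting $p_6$ into the cone equation yields the explicit quadratic $(d^{2}-R^{2})h^{2}+2R^{2}z_6 h - R^{2} z_6^{2}=0$ in $h$ (with $R$ the radius of $C$ and $d$ the distance from $p_6$ to the axis), giving at most two cones. For an ellipse, hyperbola, or parabola, the Dandelin-spheres construction places the axis of $K$ and its apex in the plane of symmetry of $C$ perpendicular to $\Pi$ and exhibits a one-parameter family (parameterised, say, by the half-angle $\theta$ or by a Dandelin-sphere radius); the constraint $p_6\in K$ then becomes a single polynomial equation in this parameter whose degree bounds the number of cones.

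The main obstacle is the sharp degree bound in the latter cases: $4$ for an ellipse or hyperbola and $3$ for a parabola. A concrete route is to specialise the polynomial system $\{H_0,H_5,H_6\}$ from the proof of Theorem~\ref{thm:12cones}, after a change of coordinates placing $\Pi=\{z=0\}$ and the non-coplanar point at $p_4$, so that $z_4\neq 0$ while $z_5=z_6=0$. In this specialisation $H_5$ and $H_6$ lose their dependence on $n$ and become affine conics in $(l,m)$, so $H_5=H_6=0$ has at most four $(l,m)$-solutions by B\'ezout; for each such solution the residual equation $H_0(l,m,n)=0$ involves only even powers of $n$, with the cone-invariance $(l,m,n)\mapsto(-l,-m,-n)$ identifying solutions pairwise. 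Sharpening the naive bound of $12$ down to $4$, $3$, or $2$ requires carefully tracking, for each type of $C$, which $(l,m)$-solutions are compatible with the symmetry axis of $C$ and how many admissible $n$-values each contributes; this admissibility analysis, which amounts to reading off the conic type of $C$ from the residual ideal, is the chief difficulty.
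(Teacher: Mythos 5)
Your framework is essentially the one the paper uses: restrict the cone's quadratic equation to the plane of the five points, observe that its trace there must be the (necessarily non-degenerate) conic $C$ through them, and then impose membership of the sixth point on the resulting one-parameter family. Your circle case is complete and correct: the axis is forced to be the normal to the plane through the centre, and your quadratic $(d^{2}-R^{2})h^{2}+2R^{2}z_6h-R^{2}z_6^{2}=0$ in the apex height has discriminant $4R^{2}d^{2}z_6^{2}>0$ for a non-coplanar sixth point off the axis, matching the count of two.

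However, for the ellipse, hyperbola and parabola the bounds $4$, $4$ and $3$ --- which are the substantive content of the theorem --- are not established; you explicitly defer them as ``the chief difficulty'', and neither of your two suggested routes closes the gap. The Dandelin-sphere remark only locates the axis in the symmetry plane of $C$; it does not produce the degree of the residual equation in the remaining parameter. The specialisation of $H_0,H_5,H_6$ with $z_5=z_6=0$ gives, as you concede, only the naive count: $H_5=H_6=0$ has up to four $(l,m)$-roots by B\'ezout, and for each of these $H_0$ is a cubic in $n^{2}$, so nothing better than the generic bound of $12$ comes out. What is missing is the coefficient identification that the paper carries out: writing $C$ in canonical form and matching the restriction of the cone equation \eqref{eq:eqcone} to $z=0$ monomial by monomial forces $lm=0$, pins down $m^{2}$ (or $l^{2}$) and $a$ (or $b$) in closed form, expresses $b$ (or $a$) in terms of $c$ and $n$, and leaves exactly two equations in $(c,n)$ --- one from the constant term of the conic, one from the sixth point --- whose elimination yields a single univariate equation of degree $4$ in $n$ (degree $3$ for the parabola, where $2cmn=-\alpha$ already ties $c$ to $n$), and which is even in $n$ precisely when $C$ is a circle, halving the count to two. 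Without this computation the stated bounds are asserted, not proved.
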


\begin{proof}It is clear that the five coplanar points must be on a conic section if there exists a cone through $\Pc$. By a change of coordinate system, one can assume that this conic section \com{lies in the plane} $z=0$ and is in its canonical form. We examine below the three cases that correspond to an ellipse, \com{a hyperbola} and a parabola. 

\medskip	
	
\noindent \textit{Ellipse.} The equation of the conic section is of the form 
$$\alpha x^2 + \beta y^2 -1 =0, \ \ \alpha,\beta > 0$$
and it must coincide with the intersection of the implicit equation of a cone \eqref{eq:eqcone} after the substitution $z=0$. \com{Now} we proceed by \lb{identifying} the monomial coefficients of these two equations. the coefficient of the monomial $xy$ yields the condition $2lm=0$ so that $l=0$ or $m=0$. If $l=0$, then the coefficients of $y^2$ yields the condition $m^2=1-\beta/\alpha$ and if $m=0$ then the coefficients of $x^2$ yields the condition $l^2=1-\alpha/\beta$. Depending on the ratio $\alpha/\beta$, only one case leads to real solutions. From now on we assume that $\alpha/\beta\geq 1$, the other case can be treated similarly. So we have that 
$$ l=0, \ \ m^2=1-\frac{\beta}{\alpha}\geq 0.$$
The coefficient of the monomial $x$ yields the condition $a=0$ and the coefficient of $y$ gives the condition
\begin{equation}\label{eq:coneellipseb}
b=-\frac{cmn}{m^2-1}=\frac{\alpha}{\beta}cmn	
\end{equation}
(observe that $m^2-1\neq 0$). 
Finally, the coefficients of the monomial $1$ yields the condition
\begin{equation}\label{eq:ellipseq1}
(\alpha n^2 - \beta) c^2 - \frac{\beta}{\alpha}=0.	
\end{equation}
\com{Now} let $(x_0,y_0,z_0)$ be the coordinates of the sixth point. \com{Evaluating} \eqref{eq:eqcone} at this point together with the conditions $l=0, a=0$ and \eqref{eq:coneellipseb} yields the condition 
\begin{equation}\label{eq:ellipseq2}
  {c}^{2}\left({n}^{2} -\frac{\beta}{\alpha} \right) -2cz_0 \left( 
   {n}^{2} -\frac{\beta}{\alpha}\right) + \varphi_2(n)=0	
\end{equation}
(recall that $\alpha,\beta >0$ and in particular $m^2-1\neq 0$) where
\begin{multline*}
\varphi_2(n):= 
  \left( -{m}^{2}{z_0}^{2}+{z_0}^{2} \right) {n
  }^{2} + \left( -2\,{m}^{3}y_0z_0+2\,my_0z_0 \right) n \\ -{m}^{4}{y_0}^{2}+{m}^{2}{x_0}
  ^{2}+2\,{m}^{2}{y_0}^{2}+{m}^{2}{z_0}^{2}-{x_0}^{2}-{y_0}^{2}-{z_0}^{2}	
\end{multline*}
is a degree 2 polynomial in $n$ \com{that} is independent of $c$. The cones through $\Pc$ are in correspondence with the solutions of the equations \eqref{eq:ellipseq1} and \eqref{eq:ellipseq2} in the variables $c$ and $n$. Using \eqref{eq:ellipseq1}, Equation \eqref{eq:ellipseq2} becomes
$$ 2cz_0\left(\alpha{n}^{2} -\beta \right)=\alpha \varphi_2(n)+\frac{\beta}{\alpha}$$
and hence we get
$$ \left( \alpha \varphi_2(n)+\frac{\beta}{\alpha}   \right)^2=4c^2z_0^2(\alpha{n}^{2} -\beta)^2= 
4z_0^2\frac{\beta}{\alpha}(\alpha{n}^{2} -\beta)
$$
which is an equation of degree 4 in $n$. Therefore, once $m$ is chosen from the condition $m^2=1-\beta/\alpha$, we obtain at most 4 values of $n$ and all the \com{other} parameters are uniquely determined. We conclude that we have at most four cones through $\Pc$, as claimed. 

If the conic section is a circle, i.e.~$\alpha=\beta$, we have $l=m=0$. In this case, $\varphi_2(n)=z_0^2(n^2-1)-(x_0^2+y_0^2)$ and hence the degree 4 equation in $n$ is actually an equation in $n^2$. Therefore, the 4 solutions \com{come in} pairs of opposite solutions and we deduce that we have two cones through $\Pc$.

\medskip	
	
\noindent \textit{Hyperbola.} The equation of the conic section is of the form   
	$$\alpha x^2 - \beta y^2 -1 =0, \ \ \alpha,\beta > 0$$
and we proceed similarly to the case of the ellipse. The identification of $xy$ implies that $2lm=0$ so we have \com{two} cases to consider, namely $l=0$ and $m=0$. 

If $l=0$ then the coefficient of $y^2$ yields $m^2=1+\beta/\alpha$. The coefficient of $x$ shows that $a=0$ and the coefficient of $y$ gives
$$b=-\frac{\alpha}{\beta}cmn.$$ 
\com{Then} the constant term yields the equality
$$b^2+c^2-(bm+cn)^2=-\frac{1}{\alpha}$$
\com{that} becomes, after \com{substituting} $b$, 
$$c^2\left(1+\frac{\alpha}{\beta}n^2\right)=-\frac{1}{\alpha}.$$
Therefore, there are no real solutions in this case. 

\com{Now} if $m=0$, then $l^2=1+\alpha/\beta$. \com{The} coefficient of $y$ shows that $b=0$ and the coefficient of $x$ gives
$$a=-\frac{ncl}{l^2-1}=-\frac{\beta}{\alpha}ncl.$$   	
\com{Then} the constant coefficient yields the equation
$$\frac{c^2( l^2+n^2-1)}{l^2-1}=\frac{1}{\beta}.
$$ 
From here, we use the sixth point as in the case of the ellipse and we get again 4 possible real cones through $\Pc$.

\medskip	
	
\noindent \textit{Parabola.} The equation of the conic section is of the form 
$$x^2-\alpha y=0, \ \ \alpha> 0.$$	
\com{Again} we inspect the coefficients of \eqref{eq:eqcone} after \com{substituting} $z=0$. The coefficient of $x^2$ gives $l=0$ and the coefficient of $y^2$ gives $m^2=1$. The coefficient of $x$ gives $a=0$ and the coefficient of $y$ gives $2cmn=-\alpha$. Finally, the constant coefficient yields 
$$b=-\frac{c^2(1-n^2)}{\alpha}.$$
\com{Now} choosing $m=1$ and using the sixth point of coordinates $(x_0,y_0,z_0)$, \com{Equation}  \eqref{eq:eqcone} gives an equation of degree 3 in $n$, namely
\begin{multline*}
-2{z_0}^{2}{n}^{3}+ \left( -{\alpha} z_0-4 y_0 z_0 \right) {n}^{2} \\ + \left( 
-2 y_0{\alpha}+2{x_0}^{2}+2{z_0}^{2} \right) n+{\alpha}z_0=0	
\end{multline*}	
and from here $c$ can be uniquely determined.
\end{proof}

\section{Conclusion}

\com{We} have presented several methods in order to extract efficiently cylinders and cones from minimal point sets. We have also provided a detailed analysis of these interpolation problems and we have given optimal bounds on the number of solutions. Our approach relies on closed algebraic formulas that have been computed and experimented with the help of a computer algebra system. In \com{the} near future, we plan to \com{incorporate} these methods into an efficient C++ library \com{in order to experiment our new methods in the framework of a RANSAC-based extraction algorithm}. Another future research direction will be \com{extracting} tori from a minimal point set, tori being described by means of seven parameters. 

\section*{Ackowledgments}

The authors are grateful to Pierre Alliez for several interesting discussions about RANSAC-type algorithms and the extraction of geometric primitives in 3D point sets. \com{We also thank the anonymous reviewers for their helpful comments}.


\begin{thebibliography}{10}

\bibitem{Beder:2006:DSC:2094437.2094451}
Christian Beder and Wolfgang F\"{o}rstner.
\newblock Direct solutions for computing cylinders from minimal sets of 3d
  points.
\newblock In {\em Proceedings of the 9th European Conference on Computer Vision
  - Volume Part I}, ECCV'06, pages 135--146, Berlin, Heidelberg, 2006.
  Springer-Verlag.

\bibitem{PA}
Matthew Berger, Andrea Tagliasacchi, Lee Seversky, Pierre Alliez, Joshua
  Levine, Andrei Sharf, and Claudio Silva.
\newblock {State of the Art in Surface Reconstruction from Point Clouds}.
\newblock In {\em {Eurographics 2014 - State of the Art Reports}}, volume~1 of
  {\em EUROGRAPHICS star report}, pages 161--185, Strasbourg, France, April
  2014.

\bibitem{BKM05}
Laurent Bus{\'e}, Houssam Khalil, and Bernard Mourrain.
\newblock Resultant-based methods for plane curves intersection problems.
\newblock In Victor~G. Ganzha, Ernst~W. Mayr, and Evgenii~V. Vorozhtsov,
  editors, {\em Computer Algebra in Scientific Computing (CASC)}, volume 3718,
  pages 75--92, Kalamata, Greece, September 2005. Springer Berlin / Heidelberg.

\bibitem{Chaperon:2001:ECF:647260.718511}
Thomas Chaperon and Fran\c{c}ois Goulette.
\newblock Extracting cylinders in full 3d data using a random sampling method
  and the gaussian image.
\newblock In {\em Proceedings of the Vision Modeling and Visualization
  Conference 2001}, VMV '01, pages 35--42. Aka GmbH, 2001.

\bibitem{Cox05}
David~A. Cox.
\newblock Solving equations via algebras.
\newblock In {\em Solving polynomial equations}, volume~14 of {\em Algorithms
  Comput. Math.}, pages 63--123. Springer, Berlin, 2005.

\bibitem{Devillers02}
Olivier Devillers, Bernard Mourrain, Franco~P. Preparata, and Philippe
  Trebuchet.
\newblock Circular cylinders through four or five points in space.
\newblock {\em Discrete \& Computational Geometry}, 29(1):83--104, 2002.

\bibitem{EM95}
Alan Edelman and H.~Murakami.
\newblock Polynomial roots from companion matrix eigenvalues.
\newblock {\em Math. Comp.}, 64(210):763--776, 1995.

\bibitem{Fischler:1981:RSC:358669.358692}
Martin~A. Fischler and Robert~C. Bolles.
\newblock Random sample consensus: A paradigm for model fitting with
  applications to image analysis and automated cartography.
\newblock {\em Commun. ACM}, 24(6):381--395, June 1981.

\bibitem{Lichtblau12}
Daniel Lichtblau.
\newblock Cylinders through five points: Computational algebra and geometry.
\newblock {\em Journal of Mathematics Research}, 4(6), 2012.

\bibitem{Lukacs:1998aa}
Gabor Luk{\'a}cs, Ralph Martin, and Dave Marshall.
\newblock Faithful least-squares fitting of spheres, cylinders, cones and tori
  for reliable segmentation.
\newblock In Hans Burkhardt and Bernd Neumann, editors, {\em Computer Vision
  --- ECCV'98}, volume 1406 of {\em Lecture Notes in Computer Science}, pages
  671--686. Springer Berlin Heidelberg, 1998.

\bibitem{schnabel-2007-efficient}
Ruwen Schnabel, Roland Wahl, and Reinhard Klein.
\newblock Efficient ransac for point-cloud shape detection.
\newblock {\em Computer Graphics Forum}, 26(2):214--226, June 2007.

\bibitem{Stetter:1996:MEH:242961.242966}
Hans~J. Stetter.
\newblock Matrix eigenproblems are at the heart of polynomial system solving.
\newblock {\em SIGSAM Bull.}, 30(4):22--25, December 1996.

\end{thebibliography}

\section*{References}

\end{document}